\newtheorem{theorem}{Theorem}[section]
\newtheorem{lemma}[theorem]{Lemma}
\theoremstyle{definition}
\newcommand{\eps}[1]{{#1}_{\varepsilon}}
\def\OO{\mathcal{O}}
\def\eps{\varepsilon}
\def\veps{\varepsilon}
\def\pa{\partial}
\def\R{\mathbb{R}}
\def\dd{\, \mathrm{d}}
\title[Justification of Quasicontinuum Models]
      {Justification of Leading Order Quasicontinuum Approximations of Strongly Nonlinear Lattices}
\author[Christopher Chong, P.G. Kevrekidis and Guido Schneider]{}
\subjclass{Primary: 35Q70, 34K07; Secondary: 35L67 .}
 \keywords{quasicontinuum approximation, error estimates, granular crystals, strongly nonlinear, shocks}
 \email{chong@math.umass.edu}
 \email{kevrekid@math.umass.edu}
 \email{guidos@mathematik.uni-stuttgart.de}
\begin{document}
\maketitle

\centerline{\scshape Christoper Chong and P.G. Kevrekidis  }
\medskip
{\footnotesize
 \centerline{Department of Mathematics and Statistics}
   \centerline{University of Massachusetts}
   \centerline{ Amherst, MA 01003-9305, USA}
} 

\medskip

\centerline{\scshape Guido Schneider}
\medskip
{\footnotesize
 \centerline{Institut f\"ur Analysis, Dynamik und Modellierung}
   \centerline{Universit\"at Stuttgart}
   \centerline{Pfaffenwaldring 57, 70569 Stuttgart, Germany.}
}

\bigskip


\begin{abstract}
We consider the leading order quasicontinuum limits of a one-dimensional 
granular medium governed by the Hertz contact law under precompression.
The approximate model which is derived in this limit is justified 
by establishing asymptotic bounds for the error with the help of energy estimates.
The continuum model predicts the development of shock waves, which are also studied 
in the full system with the aid of numerical simulations.
We also show that  existing results 
concerning the Nonlinear Schr\"odinger (NLS) and Korteweg de-Vries (KdV) approximation of FPU models apply directly to a precompressed granular medium in the weakly nonlinear regime. 
\end{abstract}


%
%
%

\section{Introduction}
\label{sec:intro}

We consider a one-dimensional granular medium which is governed by the Hertz law.
Denote $ q_n $ the relative displacement from equilibrium of the $ n $-th particle.
Then the renormalized equations of motion describing the $q_n$'s have the form \cite{Nesterenko2001},
\begin{equation} 
\ddot{q}_n  = W'(q_{n-1} - q_n)- W'(q_{n} - q_{n+1}) , \qquad (n \in \mathbb{Z}),
\end{equation}
where
\begin{equation}
W'(u) =   \left[ \delta_0 + u \right]_+^{p},  \qquad [u]_+ = u \, \Theta(u) =  \mathrm{max}(0,u)
\label{chain}
\end{equation}
where $\Theta$ is the Heaviside function and $\delta_0$ is the static load (precompression) applied to the chain at all times.
There is a double nonlinearity stemming from this system due to the lack of tensile strength 
(resulting in an asymmetric potential function $W$) and the nonlinear coupling. For spherical particles we have
$ p = 3/2$, which corresponds to the classical Hertz contact law. 

It will be convenient to work with the difference $u_n = q_{n-1}-q_n$, 
i.e. the strain, which satisfies 
\begin{equation} \label{law}
\ddot{u}_n = W'(u_{n+1})-2 W'(u_n)+ W'(u_{n-1}), \qquad (n \in \mathbb{Z}).
\end{equation}
With the long wave ansatz
\[   u_n(t) = A(X,T),  \qquad X=\varepsilon n, \quad T=\varepsilon t \]
where $ X,T, A(X,T) \in \mathbb{R}$ and $\eps \ll 1$ is a small perturbation parameter,  one can derive
the leading order continuum model approximation to the discrete dynamics 
\begin{equation} \label{eq:qc}
\partial^2_T A = \partial^2_{X} ((\delta_0+ A)^p).
\end{equation}
%
The main result of the present paper is the following approximation theorem.
\begin{theorem} \label{th1}
Fix $\delta_0 > 0$ and let $p\in\R^+$ and let 
$
A \in C \left(\left[ 0, T_0\right], H^4\right) $ with 
\[
\sup_{T \in \left[ 0, T_0\right]} \sup_{ X \in \mathbb{R}}
\left| A ( X, T) \right| \leq \delta_0/2
\qquad
\textrm{and}  \qquad  
\sup_{T \in \left[ 0, T_0\right]} \| A ( \cdot, T) \|_{H^4} \leq C_1 
\]
be a solution of \eqref{eq:qc}.
Then for $ C_1 > 0 $ sufficiently small there exists $C, \varepsilon_ 0 > 0$ such that for all
$\varepsilon \in (0, \varepsilon_0)$ there are solutions $(u_n(t))_{n
  \in \mathbb{Z}}$ of \eqref{law} satisfying
\[
\sup_{t\in [0, T_0/\varepsilon]} \sup_{n \in \mathbb{Z}} \left|
  u_n(t) - A(\varepsilon n, \varepsilon t)\right| < C \varepsilon^{3/2}.
\]
\end{theorem}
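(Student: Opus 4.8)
The plan is to control the difference between the genuine lattice solution and the continuum profile by a long-time energy estimate, propagated over $[0,T_0/\varepsilon]$ by a bootstrap. Write $\psi_n(t)=A(\varepsilon n,\varepsilon t)$ and let $R_n(t)=u_n(t)-\psi_n(t)$; I would select the solution of \eqref{law} by the well-prepared data $u_n(0)=\psi_n(0)$, $\dot u_n(0)=\dot\psi_n(0)$, so that $R_n(0)=\dot R_n(0)=0$. Local existence of $(u_n)$ in the affine space $\psi+\ell^2$ is immediate, since $\sup|A|\le\delta_0/2$ forces $\delta_0+\psi_n\ge\delta_0/2>0$, whence on the relevant range the map $u\mapsto(\delta_0+u)^p$ is smooth with bounded derivatives and the positive-part cut-off in \eqref{chain} is inactive; the solution then extends to $[0,T_0/\varepsilon]$ so long as the a priori bound keeps $\|R\|_{\ell^\infty}$ small.

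First I would compute the residual $\mathrm{Res}_n:=-\ddot\psi_n+W'(\psi_{n+1})-2W'(\psi_n)+W'(\psi_{n-1})$ obtained by inserting the ansatz into \eqref{law}. Writing $B(X,T):=(\delta_0+A)^p$, which lies in $H^4$ with norm controlled by $C_1$, and Taylor expanding the second difference as $B(X+\varepsilon)-2B(X)+B(X-\varepsilon)=\varepsilon^2\partial_X^2B+\tfrac{1}{12}\varepsilon^4\partial_X^4B+\mathcal O(\varepsilon^6)$, the $\varepsilon^2$-terms cancel by $\ddot\psi_n=\varepsilon^2\partial_T^2A$ and the continuum law \eqref{eq:qc}, leaving $\mathrm{Res}_n=\tfrac1{12}\varepsilon^4\partial_X^4B(\varepsilon n,\varepsilon t)+\mathcal O(\varepsilon^6)$. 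A Riemann-sum comparison $\sum_n|g(\varepsilon n)|^2\approx\varepsilon^{-1}\|g\|_{L^2}^2$ then gives $\|\mathrm{Res}(t)\|_{\ell^2}\le C_{\mathrm{res}}\,\varepsilon^{7/2}$, where $C_{\mathrm{res}}$ is controlled by $\|B\|_{H^4}$, hence by $C_1$. This is exactly where the $H^4$-hypothesis is used, and where small $C_1$ makes $C_{\mathrm{res}}$ small.

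Subtracting the equations and setting $c_n:=W''(\psi_n)=p(\delta_0+\psi_n)^{p-1}\ge p(\delta_0/2)^{p-1}=:c_\ast>0$, the error solves $\ddot R_n=D^2(cR)_n+D^2N_n+\mathrm{Res}_n$, where $D^2g_n:=g_{n+1}-2g_n+g_{n-1}$ and $|N_n|\le C R_n^2$ on the smooth range. The natural energy is $\mathcal E(t):=\tfrac12\|\dot R\|_{\ell^2}^2+\tfrac12\sum_n c_n(\nabla^+R_n)^2$ with $\nabla^+R_n:=R_{n+1}-R_n$; since $c_n\ge c_\ast$ it controls $\|\dot R\|_{\ell^2}$ and $\|\nabla^+R\|_{\ell^2}$. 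Differentiating in $t$ and summing by parts, the principal part of $D^2(cR)$ cancels $\tfrac{d}{dt}$ of the potential energy, and every surviving term carries a spatial difference $\nabla^+c_n=\mathcal O(\varepsilon)$ or a time derivative $\dot c_n=\mathcal O(\varepsilon)$ of the slowly varying coefficient. The only dangerous term, $-\sum_n(\nabla^+c_n)(\nabla^+\dot R_n)R_{n+1}$, which involves the uncontrolled $\nabla^+\dot R$, I would integrate by parts in time, moving a total time derivative into a modified energy $\widetilde{\mathcal E}=\mathcal E+\sum_n(\nabla^+c_n)(\nabla^+R_n)R_{n+1}$, whose correction is absorbed by Young's inequality into $\tfrac{c_\ast}{4}\|\nabla^+R\|_{\ell^2}^2$ plus a harmless $\mathcal O(\varepsilon^2)\|R\|_{\ell^2}^2$ tracked by the bootstrap, leaving only terms bounded by $\varepsilon\,\mathcal E$ and $\varepsilon^2\|\nabla^+R\|_{\ell^2}\|R\|_{\ell^2}$. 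The nonlinearity is treated without moving derivatives, using that $D^2$ is bounded on $\ell^2$: $|\langle\dot R,D^2N\rangle|\le 4\|\dot R\|_{\ell^2}\,\|N\|_{\ell^2}\le C\|\dot R\|_{\ell^2}\,\|R\|_{\ell^\infty}\|R\|_{\ell^2}$. The outcome is an inequality of the form $\tfrac{d}{dt}\widetilde{\mathcal E}\le C\varepsilon\,\widetilde{\mathcal E}+\sqrt{2\widetilde{\mathcal E}}\,\big(\|\mathrm{Res}\|_{\ell^2}+C\|R\|_{\ell^\infty}\|R\|_{\ell^2}\big)$.

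Finally I would run a bootstrap on a maximal interval under the hypotheses $\|R\|_{\ell^2}\le M\varepsilon^{3/2}$ and $\widetilde{\mathcal E}\le K^2\varepsilon^5$. The key device is the discrete Agmon inequality $\|R\|_{\ell^\infty}^2\le 2\|R\|_{\ell^2}\|\nabla^+R\|_{\ell^2}$, which turns these hypotheses into $\|R\|_{\ell^\infty}\le C(MK)^{1/2}\varepsilon^2$ and thereby renders the nonlinear forcing $\|R\|_{\ell^\infty}\|R\|_{\ell^2}=\mathcal O(\varepsilon^{7/2})$, of the same order as the residual. Since every destabilizing coefficient carries a factor $\varepsilon$, Gronwall over $t\le T_0/\varepsilon$ keeps the exponential at $e^{CT_0}=\mathcal O(1)$ and yields $\sqrt{\widetilde{\mathcal E}(t)}\le e^{CT_0}T_0\big(C_{\mathrm{res}}+C(MK)^{1/2}M\big)\varepsilon^{5/2}$; recovering the amplitude from $R(0)=0$ through $\|R(t)\|_{\ell^2}\le\int_0^t\|\dot R\|_{\ell^2}\,ds\le\sqrt2\,K T_0\,\varepsilon^{3/2}$ closes the first hypothesis. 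Comparing these two inequalities shows the constants $M,K$ can be chosen consistently precisely when $C_{\mathrm{res}}$, hence $C_1$, is small enough, which is the role of the smallness hypothesis on $C_1$. The theorem then follows from $\sup_n|u_n-A(\varepsilon n,\varepsilon t)|=\|R\|_{\ell^\infty}\le\|R\|_{\ell^2}\le C\varepsilon^{3/2}$. I expect the main obstacle to be precisely this variable-coefficient, long-time energy estimate: one must verify that each term produced by commuting $D^2$ past the slowly varying $c_n$, and each nonlinear term, is either an exact time derivative or carries a compensating power of $\varepsilon$, so that no $\mathcal O(1)$ growth rate survives over an interval of length $T_0/\varepsilon$.
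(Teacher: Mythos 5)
Your proposal is correct in substance, but it follows a genuinely different route from the paper. The paper works in Fourier space on $\mathbb{Z}$: it expands the nonlinearity in the power series $(\delta_0+u)^p=\sum_{\ell}b_\ell u^\ell$ (this is precisely where the hypothesis $|A|\le\delta_0/2$ enters, the radius of convergence being $\delta_0$), periodizes the long-wave ansatz by a cut-off $\mathcal{A}_{\#}$ (whose distance to $\mathcal{A}$ is estimated separately in its Lemma \ref{Lemma4}), writes $\widehat u=\widehat{\mathcal{A}_{\#}}+\varepsilon^{3/2}\widehat R$, proves the residual bounds $\|\widehat{\mathrm{Res}}_{\#}\|_{L^2_{\rm per}}\le C\varepsilon^{7/2}$ and $\|\omega^{-1}\widehat{\mathrm{Res}}_{\#}\|_{L^2_{\rm per}}\le C\varepsilon^{5/2}$, and then runs Gronwall on an $\omega^{-1}$-weighted energy, corrected by a modified energy so that the terms which are not $\mathcal{O}(\varepsilon)$-small become total time derivatives. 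Because of the $\omega^{-1}$ weight, that energy controls $\|\widehat R\|_{L^2}$ directly at the right scale, so no second bootstrap tier is needed. You instead work entirely in physical space: linearization around $\psi_n=A(\varepsilon n,\varepsilon t)$ with variable coefficient $c_n=W''(\psi_n)\ge c_*>0$, the discrete energy $\tfrac12\|\dot R\|_{\ell^2}^2+\tfrac12\sum_n c_n(\nabla^+R_n)^2$, a modified energy to remove the commutator term, and a two-tier bootstrap ($\|R\|_{\ell^2}\lesssim\varepsilon^{3/2}$, energy $\lesssim\varepsilon^5$) closed by the discrete Agmon inequality. Your identification of the two crucial devices is exactly right: without the modified energy the commutator term costs a factor $\varepsilon^{-1}$ over the interval $[0,T_0/\varepsilon]$, and without Agmon the quadratic nonlinearity is only $\mathcal{O}(\varepsilon^3)$ instead of $\mathcal{O}(\varepsilon^{7/2})$; in either case the bootstrap fails at the $\varepsilon^{3/2}$ level. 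What each approach buys: yours needs no periodization and no power series at all (only $W'\in C^3$ near $\delta_0$, with convolution estimates replaced by pointwise products), so it is more elementary and less tied to analyticity; the paper's Fourier/series machinery is exactly what it reuses for the KdV and NLS theorems of Section \ref{sec:nls_kdv}, and its $\omega^{-1}$-weighted energy avoids your loss of $\varepsilon^{-1}$ when recovering $\|R\|_{\ell^2}$ from $\int\|\dot R\|_{\ell^2}$.

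Two points need repair, neither fatal. (i) Your diagnosis of the dangerous term is slightly off: $\nabla^+\dot R$ is not uncontrolled, since $\|\nabla^+\dot R\|_{\ell^2}\le 2\|\dot R\|_{\ell^2}$; the term is dangerous because its factor $R_{n+1}$ lives one power of $\varepsilon$ above the energy scale, so treated as forcing it is too large by $\varepsilon^{-1}$ after Gronwall on $[0,T_0/\varepsilon]$. Your remedy (integration by parts in time, i.e., the modified energy) is nonetheless exactly the right one and is genuinely needed. (ii) The residual step as written, $\mathrm{Res}_n=\tfrac1{12}\varepsilon^4\partial_X^4B(\varepsilon n,\varepsilon t)+\mathcal{O}(\varepsilon^6)$ followed by a Riemann sum over samples of $\partial_X^4B$, requires more regularity than the hypothesis provides: for $A\in H^4$ one has $B=(\delta_0+A)^p\in H^4$, so $\partial_X^4 B\in L^2$ only, which has no well-defined point values, and the $\mathcal{O}(\varepsilon^6)$ term would need roughly $H^6$. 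The bound you actually need, $\|\mathrm{Res}\|_{\ell^2}\le C\varepsilon^{7/2}\|B\|_{H^4}$, does hold at $H^4$ regularity, but it should be derived from the integral form of the Taylor remainder, namely $|\mathrm{Res}_n|\le C\varepsilon^{3}\int_{I_n}|\partial_X^4 B|\dd X\le C\varepsilon^{7/2}\bigl(\int_{I_n}|\partial_X^4 B|^2\dd X\bigr)^{1/2}$ with $I_n=[\varepsilon(n-1),\varepsilon(n+1)]$, followed by summation over $n$ using the finite overlap of the cells $I_n$. With these two adjustments your argument goes through and yields the theorem.
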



In the weakly nonlinear case, i.e. if the precompression is much greater than the amplitude of the solution,
then the Nonlinear Schr\"odinger (NLS) and Korteweg de-Vries (KdV) equations can be derived as continuum models
for solutions that have amplitude of order $\OO(\eps)$ and $\OO(\eps^2)$ respectively, see Section~\ref{sec:nls_kdv}.
For solutions of order $\OO(1)$ (i.e. solutions with  amplitude that are
on the same order as the precompression) model~\eqref{eq:qc} is relevant.
In order to establish local existence and uniqueness of the continuum model and to handle
the nonlinearity in Fourier space, we will expand the nonlinearity as a series.  Since the amplitude is $\OO(1)$
we cannot truncate this series (which is done when deriving the NLS and KdV equations). Therefore,
we require that the precompression is of the same order as the amplitude (but not greater). It is unclear if this is only a technical assumption
or if it is really necessary for having the approximation property. 
It will be the subject of future research to explore whether or not the assumptions of Theorem \ref{th1} can be
relaxed, and if so, how the continuum model \eqref{eq:qc} needs to be modified.

If one does not ignore higher order terms in the derivation (see next section), then one arrives
at the following $\OO(1)$ continuum model for the strain in
the purely nonlinear case (if $\delta_0=0$)
\begin{equation} 
\partial^2_T A = \partial^2_{X} (A^p)   + \frac{\epsilon^2}{12} \pa_x^4 (A^p) \label{eq:Pik} 
\end{equation}
which is the equation derived in \cite{pikovsky} (where the small parameter $\eps$ is formally set to unity).
From a physical standpoint, the reason for doing so is
understandable, as the higher order term in the model affords the existence of exact localized traveling wave solutions
(which do not exist in the continuum model~\eqref{eq:qc}). 
An alternative approach towards such a (higher order) quasi-continuum
model is the one spearheaded in the earlier work of~\cite{Nesterenko2001},
where one does a similar formal Taylor-expansion based calculation
at the displacement level and then differentiates with respect to $x$ to
derive the effective long-wavelength equation for the strains $r=q_x$.
Remarkably, it should be pointed out, as indicated in~\cite{pikovsky},
that these two procedures (reverting to strain variables and Taylor 
expanding to go to long wavelengths) do not commute, a feature which poses
a mathematical challenge in its own right (about their respective validity).
It should be noted that from a visual inspection the solitary wave profiles 
predicted
by~\eqref{eq:Pik} (and the corresponding ones of~\cite{Nesterenko2001}) 
compare quite well with the ``numerically exact'' traveling wave solution of the granular crystal model~\eqref{law} \cite{pikovsky}.
Nevertheless, without an approximation theorem, it is not clear 
to what level/extent these kinds of approximations can be used.
For example, for the initial value problem (and not just for very special
solutions, such as the exact traveling wave ones),
there are several nontrivial concerns about models such as those 
of~\cite{pikovsky,Nesterenko2001} in connection to their
 local uniqueness and existence properties
and the fact the solutions will depend on the small 
parameter $\epsilon$
(hence it is unclear what, if any, influence the $\OO(\epsilon)$ terms have on the dynamics on the $ O(1) $ time scale w.r.t. $ T $).
Thus, in terms of providing rigorous error estimates for the initial value problem, model~\eqref{eq:qc} is more appropriate, at least as a starting 
point or a leading order approximation.
Although Eq.~\eqref{eq:qc} fails to describe exact traveling solitary waves
(and hence the proper, controllable generalization to higher order so as to
capture this important trait is,
from a rigorous perspective, an open problem), the model does
well in other regards, such as in the description of shock wave formation. 
This, and other properties of~\eqref{eq:qc} are described in
Sections~\ref{sec:qc_exist} and \ref{sec:numerics}.

{\bf Notation.} Throughout this paper many possibly different constants are denoted with the 
same symbol $ C $ if they can be chosen independently of the small perturbation 
parameter $ 0 < \varepsilon \ll 1 $.

\section{Fourier transform as fundamental tool}
\label{sec:fourier}

The Fourier transform is the major tool 
in the proof of the approximation result. In this section we recall some basic facts and
establish notation conventions. 

Since the solutions $ u $ of  \eqref{law}  
live on $ \mathbb{Z} $ we need the 
Fourier transform on $ \mathbb{Z} $ leading to periodic functions 
in Fourier space.

\paragraph{\bf Fourier transform on $ \mathbb{Z} $:}
System \eqref{law} can be transferred into Fourier space by 
\begin{equation} \label{transform}
\hat{u}(k,t) = \mathcal{F}(u)(k,t) = \frac{1}{2\pi} 
\sum_{n \in \mathbb{Z}} u_n(t) e^{-ikn}.
\end{equation}
The inverse of $\mathcal{F} $ is given by 
\begin{equation} \label{transformback}
u_n(t) = (\mathcal{F}^{-1}\hat{u})_n(t) = 
\int_{-\pi}^{\pi} \hat{u}(k,t) e^{ikn} dk.  
\end{equation}
For every $ s \geq 0 $ the
Fourier transform $ \mathcal{F} $ is continuous from 
\[
\ell^{2}_s= \{ u:\mathbb{Z} \to \mathbb{R} \ | \ \| u(\cdot) \|_{\ell^{2}_s} <  \infty \}
\]
 into 

\begin{align*}
H^s_{\rm per}(\mathbb{R},\mathbb{C})  = & \{ \hat{u}: \mathbb{R} \to \mathbb{C} \ | \ 
 \hat{u}(\cdot) \ \mathrm{is} \  s \ 
\mathrm{times \ weakly \ differentiable}, \\
& \quad \| \hat{u}(\cdot) \|_{H^s_{\rm per}}< \infty, \  \hat{u}(k) =  \hat{u}(k+2\pi) \}
\end{align*}
where
\[
\| u(\cdot,t) \|_{\ell^{2}_s}^2 
= \sum_{n \in \mathbb{Z}} (|u_n(t)|^2 (1+n^2)^s)
\quad 
{\rm and}  
\quad 
\| \hat{u}(\cdot,t) \|_{H^s_{\rm per}}^2 = \sum_{m=0}^s (\int_{-\pi}^{\pi}
 |\partial_k^m 
\hat{u}(k,t) |^2 dk).
\]
The inverse Fourier transform $ \mathcal{F}^{-1} $ is continuous from 
$ H^s_{\rm per} $ into $ \ell^{2}_s $ and from $ L^1_{\rm per} $ 
into $ \ell^{\infty} $
where
\[
\| \hat{u}(\cdot,t) \|_{L^1_{\rm per}} =  \int_{-\pi}^{\pi} | 
\hat{u}(k,t) | dk
\qquad 
{\rm and}  
\qquad 
\| u(\cdot,t) \|_{\ell^{\infty}} 
= \sup_{n \in \mathbb{Z}} |u_n(t)|.
\]
\paragraph{\bf Fourier transform on $ \mathbb{R} $:}
Beside the Fourier transform on $ \mathbb{Z} $ we need the Fourier transform
on the real line in order to handle the continuum model  
which lives on the real line.
We set 
\begin{equation} \label{transformr}
\hat{u}(k,t) = \mathcal{F}(u)(k,t) = \frac{1}{2\pi} \int_{-\infty}^{\infty} u(x,t) e^{-ikx} dx.
\end{equation}
The inverse is given by 
\begin{equation} \label{transformbackr}
u(x,t) = \mathcal{F}^{-1}(\hat{u})(x,t) = 
\int_{-\infty}^{\infty} \hat{u}(k,t) e^{ikx} dk.  
\end{equation}
For every $ s \geq 0 $ the
Fourier transform $ \mathcal{F} $ is continuous from 
\[
L^{2}_s= \{ u:\mathbb{R} \to \mathbb{C} \ | \ \| u(\cdot) \|_{L^{2}_s} <  \infty \}
\] into 
\begin{eqnarray*}
H^s(\mathbb{R},\mathbb{C}) & = & \{ \hat{u}: \mathbb{R} \to \mathbb{C} \ | \ 
 \hat{u}(\cdot) \ \mathrm{is} \  s \ 
\mathrm{times \ weakly \ differentiable}, 
\ \| \hat{u}(\cdot) \|_{H^s}< \infty \   \}
\end{eqnarray*}
where
\[
\| u(\cdot,t) \|_{L^{2}_s}^2 
= \int_{-\infty}^{\infty} (|u(x,t)|^2 (1+x^2)^s) dx
\quad 
{\rm and}  
\quad 
\| \hat{u}(\cdot,t) \|_{H^s}^2 = \sum_{m=0}^s (\int_{-\infty}^{\infty} |\partial_k^m 
\hat{u}(k,t) |^2 dk).
\]
The Fourier transform $ \mathcal{F} $ and its inverse  $ \mathcal{F}^{-1} $ 
are continuous from 
$ H^s $ to $ L^{2}_s $ and vice versa. Moreover, they are continuous 
from $ L^1_s $ into $ C^s_b $
where
\[
\| \hat{u}(\cdot,t) \|_{L^1_s} =  \int_{-\infty}^{\infty} | 
\hat{u}(k,t) | (1+k^2)^{s/2} dk
\qquad 
{\rm and}  
\qquad 
\| u(\cdot,t) \|_{C^s_b} 
= \sum_{j=0}^s \sup_{x \in \mathbb{R}} |\partial_x^j u(x,t)|.
\]

\section{Derivation of the continuum model}
\label{sec:derive}


We assume the existence of a $\delta_0 > 0$ such that
$\inf_{n \in \mathbb{Z}} u_n(t) \geq - \delta_0$ for all $t \geq 0$. This is equivalent to requiring that the beads remain in contact for all times.
The existence of such solutions will be justified below. With this assumption we may ignore the Heaviside function in the potential, i.e. we 
have $ W'(u) = (\delta_0 + u)^p $.  Thus, we may expand the nonlinearity in a Taylor series,

$$(\delta_0 + u)^p = (\delta_0)^p (1+\frac{u}{\delta_0})^p = \sum^\infty_{\ell = 0} b_\ell u^\ell$$
with real-valued coefficients $b_{\ell} = b_{\ell} (\delta_0,p) $. With this expansion we find

\begin{equation} \label{law2}
\partial^2_t u_n = \sum^\infty_{\ell = 1} b_\ell (u^\ell_{n+1} - 2
u^\ell_n + u^\ell_{n-1}).
\end{equation}
Taking the 
Fourier-transform of the right hand side of \eqref{law2} yields

\begin{align*}
& \sum_{n \in \mathbb{Z}} \sum^\infty_{\ell = 1} b_\ell (u_{n+1}^{\ell} - 2 u_n^{\ell} + u_{n-1}^{\ell} ) e^{ikn}  = \sum_{n \in \mathbb{Z}} \sum^\infty_{\ell = 1} b_\ell (e^{ik} - 2 + e^{-ik}) u_{n}^\ell e^{ikn} \\
&  = - \omega (k)^2 \sum_{n \in \mathbb{Z}} \sum^\infty_{\ell = 1} b_\ell u^{\ell}_n e^{ikn}  = - \omega (k)^2 \sum^\infty_{\ell = 1} b_\ell \widehat{u}^{*\ell}(k)
\end{align*}
where $\widehat u^{\ast \ell}$ denotes the $(\ell-1)$-times convolution of
$\widehat u$ with itself, 
 $\omega (k)^2= 2(1-\cos (k))$ and $ \widehat u (k,t) = \widehat u (k+2\pi,t) $.
In Fourier space, the system \eqref{law} is therefore given by
\begin{equation} \label{Fourier}
\partial^2_t \widehat u (k,t) = - \omega (k)^2 \sum^\infty_{\ell = 1} b_\ell \widehat{u}^{*\ell}(k).
\end{equation}
We should note here in passing that the use of Fourier space techniques
in the context of granular systems was pioneered in the work of~\cite{pego05},
where it was used to develop an understanding of the decay properties
of traveling waves in the absence of precompression, as well as to 
offer an efficient numerical tool for computing them. Among the recent 
ramifications of this approach are the proof of the existence of such
bell-shaped traveling waves without~\cite{atanas1} and with~\cite{atanas2}
precompression.
 
The long wave limit ansatz
$ u(n,t) = A(\varepsilon n,\varepsilon t)  $
 is given in Fourier space by
\begin{equation} \label{eq45}
\widehat u (k,t) = \varepsilon^{-1} \widehat A \left(K, T\right),  \qquad  K = \frac{k}{\varepsilon}, \qquad T = \varepsilon t 
\end{equation}
with $ \widehat A : \mathbb{R} \to \mathbb{C} $ a function decaying to zero for $ |k| \to \infty $.
Inserting this ansatz into \eqref{Fourier}, rescaling the integrals, taking formally the limit
$\int_{-\pi/\veps}^{\pi/\veps} \rightarrow \int_{-\infty}^{\infty} $ yields
\[
\partial^2_T \widehat A ( K , T) = -  K^2  \sum^\infty_{\ell = 1} b_\ell  \widehat{A}^{*l} ( K , T) + \mathcal{O} (\varepsilon^2)
\]
where we used that $\omega (k)^2= k^2 + \mathcal{O}(k^4) = \eps^2 K^2 + \mathcal{O}((\veps K)^4) $.
Ignoring the higher order terms and taking the inverse Fourier transform 
of this expression yields our continuum  model  \eqref{eq:qc}. 
Note that keeping the next term in
the expansion of $\omega^2$ would yield Eq.~\eqref{eq:Pik} which, 
as mentioned above, is not covered by the proof presented herein.

Before we turn these formal calculations into rigorous arguments 
we consider properties of the continuum  model \eqref{eq:qc} itself.

\section{Local existence and uniqueness of the continuum equation} 
\label{sec:qc_exist}

We will need a certain regularity of the solutions of the continuum  model  \eqref{eq:qc},
therefore, we  prove the following existence and uniqueness result for the limit equation.

\begin{lemma}
Fix $\delta_0 > 0$ and $ s \geq 4$. Let $A_0 = A_0 (\cdot) \in H^s$ satisfy
$\sup_{{X} \in \mathbb{R}} \left|A_0 ( X) \right|
\leq \delta_0$ and $\left\| A_0 \right\|_{H^s} \leq C_1$ for a $C_1 >
0$.
Then for $ C_1 > 0 $ sufficiently small there exists a $T_0 > 0$ and solutions $A = A( X , T)$ of \eqref{eq:qc} with 
\begin{equation}\label{Lemma}
A\in C \left(
  \left[ 0,T_0\right], H^s\right) \textup{ and } \sup_{T\in \left[ 0,
    T_0\right]} \sup_{ X \in \mathbb{R}} \left| A( X,
  T)\right| \leq \delta_0/2.
\end{equation}
\end{lemma}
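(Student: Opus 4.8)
The plan is to treat \eqref{eq:qc} as a quasilinear wave equation and to establish local well-posedness in $H^s$ by the classical hyperbolic energy method; a naive contraction argument in a fixed $H^s$-ball cannot work because of loss of derivatives. First I would rewrite the equation in divergence form: setting $f(A) = (\delta_0 + A)^p$ and $a(A) := f'(A) = p(\delta_0 + A)^{p-1}$, the identity $\partial_X^2 f(A) = \partial_X(a(A)\,\partial_X A)$ turns \eqref{eq:qc} into
\[
\partial_T^2 A = \partial_X\!\big(a(A)\,\partial_X A\big).
\]
As long as the constraint $\sup_X|A| \leq \delta_0/2$ holds, one has $\delta_0 + A \in [\delta_0/2,\,3\delta_0/2]$, so $f$ is smooth on the relevant range and the coefficient obeys the uniform bound $a(A) \geq a_0 := p(\delta_0/2)^{p-1} > 0$. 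Thus the equation is uniformly (quasilinearly) hyperbolic. Since \eqref{eq:qc} is second order in time, I would supplement $A_0$ with an initial velocity $A_1 := \partial_T A|_{T=0} \in H^{s-1}$ (for instance $A_1 = 0$) so as to obtain a well-posed Cauchy problem; the result holds uniformly over bounded choices of $A_1$.

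The derivative loss is the crux of the matter: since $A \in H^s$ gives only $\partial_X(a(A)\partial_X A) \in H^{s-2}$, one cannot close a fixed point directly in $C([0,T_0],H^s)$, and this is precisely what the hyperbolic structure repairs. I would therefore construct solutions by the standard linear iteration: set $A^{(0)} \equiv A_0$ and, given $A^{(n)}$ obeying the amplitude constraint, let $A^{(n+1)}$ solve the \emph{linear} wave equation $\partial_T^2 A^{(n+1)} = \partial_X(a(A^{(n)})\partial_X A^{(n+1)})$ with data $(A_0,A_1)$. Each such linear problem is solvable with $A^{(n+1)} \in C([0,T_0],H^s)\cap C^1([0,T_0],H^{s-1})$, because the variable coefficient $a(A^{(n)})$ is smooth, bounded, and bounded below by $a_0$. (Equivalently, one may symmetrize the first-order system in $(A,\partial_T A,\partial_X A)$ with symmetrizer $\mathrm{diag}(1,a(A))$ and invoke Kato's theory of symmetric hyperbolic systems.)

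For the uniform-in-$n$ bounds I would use, for $0 \leq j \leq s-1$, the energy
\[
E_j(T) = \tfrac12 \int_{\R}\Big[(\partial_T\partial_X^{\,j} A)^2 + a(A)\,(\partial_X^{\,j+1}A)^2\Big]\dd X,
\qquad
E = \|A\|_{L^2}^2 + \sum_{j=0}^{s-1} E_j,
\]
applied to the iterates. Differentiating $E_j$ in $T$, substituting the equation and integrating by parts, the top-order terms cancel, leaving commutator contributions $[\partial_X^{\,j},\,a(A)\partial_X]\partial_X A$ and a term $\tfrac12\int a'(A)\,\partial_T A\,(\partial_X^{\,j+1}A)^2\dd X$, all controlled by $E$ itself via Moser/Kato--Ponce estimates and the algebra property of $H^s$ for $s \geq 4 > 1/2$. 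This gives a differential inequality $\tfrac{d}{dT}E \leq C(E)$, hence a time $T_0 > 0$ and a bound $\sup_{T\in[0,T_0]}\|A^{(n)}(\cdot,T)\|_{H^s} \leq 2C_1$ uniform in $n$. The amplitude constraint is maintained by a bootstrap: the embedding $H^s \hookrightarrow C^0_b$ yields $\sup_X|A^{(n)}(\cdot,T)| \leq C_{\mathrm{Sob}}\cdot 2C_1 \leq \delta_0/2$ on $[0,T_0]$ once $C_1$ is small, which is exactly the hypothesis guaranteeing $a(A^{(n)})\geq a_0$ and keeps the scheme consistent. I would then show the iterates are Cauchy in the weaker norm $C([0,T_0],H^{s-1})$ by applying the same energy estimate to the difference $A^{(n+1)}-A^{(n)}$ (now losing only one derivative, which is affordable), pass to the limit to obtain $A \in C([0,T_0],H^s)$ solving \eqref{eq:qc}, and establish uniqueness by an identical difference estimate.

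The main obstacle is exactly the loss of derivatives in the quasilinear term $\partial_X^2 f(A)$: it rules out a direct Picard/contraction scheme in $H^s$ and forces the energy-method cancellation together with the commutator estimates to close the bound. A secondary technical point is the simultaneous propagation of $\sup_X|A|\leq \delta_0/2$, which is at once a hypothesis (ensuring hyperbolicity through $a(A)\geq a_0$ and the smoothness of $f$) and a conclusion; this circularity is broken by the smallness of $C_1$ combined with Sobolev embedding.
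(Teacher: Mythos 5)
Your proposal is correct and takes essentially the same route as the paper: both treat \eqref{eq:qc} as a quasilinear hyperbolic equation, close a priori bounds with the energy $\tfrac12\int \big(\partial_T\partial_X^{\,j} A\big)^2 + a(A)\big(\partial_X^{\,j+1}A\big)^2 \dd X$ (the paper's weight $\sum_{\ell} \ell b_{\ell}A^{\ell-1}$ is exactly your $a(A)=p(\delta_0+A)^{p-1}$, written as a power series), propagate the amplitude constraint via smallness of $C_1$ and Sobolev embedding, and obtain existence/uniqueness from Kato's quasilinear theory, which you cite as the equivalent alternative to your explicit linear iteration. The only presentational difference is that the paper expands the nonlinearity in a Taylor series with radius of convergence $\delta_0$ (a form it reuses later in the Fourier-space residual analysis) rather than working with the coefficient $a(A)$ directly.
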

\begin{proof}

We know $A$  satisfies 
\[\partial^2_T A = \partial^2_{ X} \left( (\delta_0 + A)^p \right).\]
As before, we expand the nonlinearity in a series 

\begin{align*}
(\delta_0 + A)^p &= (\delta_0  )^p \left( 1 + \frac{A}{
    \delta_0}\right)^p = \sum^\infty_{\ell = 0} b_{\ell}A^{\ell} 
    \end{align*}
 with real-valued coefficients $b_{\ell}$ as defined before. The series 
is convergent for $\left| A\right| <   \delta_0 $.
Applying $ \partial^{-1}_{X} $ to \eqref{eq:qc} then multiplying the resulting equation with 
$\partial^{-1}_{ X} \partial_T A $ and then integrating w.r.t. $ X $ yields 

\begin{align*}
\int_{-\infty}^{\infty} \left( \partial^{-1}_{ X} \partial_T A \right)\left(  \partial^{-1}_{ X}  \partial^2_T A\right) \dd
 X 
&= \int_{-\infty}^{\infty} \left( \partial^{-1}_{ X} \partial_T
  A\right) \partial_X \left( (\delta_0 + A)^p\right) \dd 
X \\
&= \int_{-\infty}^{\infty} \left( \partial^{-1}_{ X} \partial_T A\right)
\sum^\infty_{\ell = 1} b_{\ell} \partial_X (A^{\ell}) \dd  X
\end{align*}
which yields

\begin{align*}
\frac12 \int_{-\infty}^{\infty} \partial_T \left( \left( \partial^{-1}_{
      X} \partial_T A\right)^2\right) \dd  X &= - \int_{-\infty}^{\infty}
\left( \partial_TA\right) \sum^\infty_{\ell= 1} b_{\ell} A^{\ell} \dd  X
\\
&= - \int_{-\infty}^{\infty} \sum^\infty_{\ell = 1} \frac{1}{\ell + 1} b_{\ell} \partial_T \left(
  A^{{\ell}+1}\right) \dd  X
\end{align*}
and so 

\[
 \partial_T E_0 = 0 
\]
where 

\[
E_0 = \left( \int_{-\infty}^{\infty} \frac12 \left( \partial^{-1}_{ X} \partial_T A
  \right)^2 + \sum^\infty_{\ell = 1} \frac{1}{\ell+1} b_{\ell} A^{{\ell}+1} \dd
   X\right).
\]
Proceeding similarly for the derivatives yields

\begin{align*}
\int_{-\infty}^{\infty} \left( \partial^s_{ X} \partial_T A
\right) \left( \partial^s_{ X} \partial^2_T A \right)\dd  X
&= \int_{-\infty}^{\infty} \left( \partial^s_{ X} \partial_T A
\right) \partial^s_{ X} \partial^2_X \left( (\delta_0 +
  A)^p\right) \dd  X \\
&= \int_{-\infty}^{\infty} \left( \partial^s_{ X} \partial_T A\right)
\sum^\infty_{\ell = 1} b_{\ell} \partial^{s+2}_X (A^{\ell})\dd  X.
\end{align*}
By partial integration we obtain 

\begin{align*}
\frac12 \int_{-\infty}^{\infty} \partial_T \left( \partial^s_{ X} \partial_T
  A\right)^2 \dd  X &= - \int_{-\infty}^{\infty}
\left( \partial^{s+1}_{ X} \partial_T A \right)
\sum^\infty_{\ell = 1} b_{\ell} \partial^{s+1}_{ X} (A^{\ell}) \dd
 X \\
&= - \frac12 \int_{-\infty}^{\infty} \sum^\infty_{\ell = 1} \ell b_{\ell} A^{{\ell}-1} \partial_T \left(
  \left( \partial^{s+1}_{ X} A\right)^2\right) \dd
 X + \widetilde G_s \\
&= - \frac12 \int_{-\infty}^{\infty} \partial_T \left(\sum^\infty_{\ell = 1} \ell b_{\ell} A^{{\ell}-1}
  \left( \partial^{s+1}_{ X} A\right)^2 \right) \dd
 X + G_s
\end{align*}
where $\widetilde G_s, G_s$ only contain terms with at
most $s+1$ spatial and temporal derivatives.
Therefore

\[
 \partial_T E_{s+1} = G_s
\]
where
\[
E_{s+1} = \frac12 \int_{-\infty}^{\infty} \left( \partial^s_{ X} \partial_T A\right)^2  +
\left( \sum^\infty_{\ell = 1} \ell b_{\ell} A^{{\ell}-1}
  \left( \partial^{s+1}_{ X} A\right)^2 \right) \dd
 X
\]
and
\[
\left| G_s\right| \leq C \left( \left\| A\right\|^3_{H^{s+1}} +
  \left\| \partial_T A \right\|^3_{H^s}\right)
\]
if $\left\| A\right\|_{H^{s+1}}$ and $\left\| \partial_T
  A\right\|_{H^s}$ are smaller than half of the radius of convergence of
the involved series. Note that all series have the same radius of
convergence, namely $ \delta_0 $.

Since a multiple of $\mathcal{E}_{s+1} = E_0 + \ldots + E_{s+1}$ is an upper bound of  the squared
$H^{s+1}$-norm we obtain an estimate 
\[
 \partial_T \mathcal{E}_{s+1} \leq C \mathcal{E}_{s+1}^{3/2}.
\]
Hence by Gronwall's inequality we can guarantee that
$\left\| A\right\|_{H^{s+1}}$ 
stays in between half the radius of convergence for all $t\in \left[0,
  T_0\right]$ if $T_0 > 0$ and $ C_1 > 0 $ are chosen sufficiently small.
Since for $x \in \mathbb{R}$ the sup-norm can be estimated by the
$H^1$-norm, the second inequality of \eqref{Lemma} follows too.
Since the previous a priori estimates guarantee that we have a quasilinear system in the sense 
of \cite{Kato75}, the local existence and uniqueness of solutions follows.
\end{proof}

\section{Estimates for the residual}
\label{sec:residual}

For the proof of the approximation result we need a way to measure how ansatz \eqref{eq45}
fails to satisfy \eqref{law}, i.e. we will need estimates for the residual.

It turns out to be advantageous to work in Fourier space, i.e., to work with 
\eqref{Fourier} instead of \eqref{law}. The error $ \varepsilon^{\beta} \widehat{R} = \widehat{u} -\widehat{\mathcal{A}} $ 
with $ \widehat{\mathcal{A}}(k,t) = \varepsilon^{-1} \widehat{A}(K,T) $
satisfies 

\begin{align*} 
 \varepsilon^{\beta} \partial^2_t \widehat R (k,t)  
= & - \omega (k)^2   \sum^\infty_{\ell = 1}  b_{\ell}  \widehat{(\mathcal{A}+\varepsilon^{\beta} R)^{*\ell}}(k,t) - \partial^2_t \widehat{\mathcal{A}} (k,t) \\
=&- \omega (k)^2   \sum^\infty_{\ell = 1}  b_{\ell}  \widehat{(\mathcal{A}+\varepsilon^{\beta} R)^{*\ell}}(k,t)  
 +  \omega (k)^2 \sum^\infty_{\ell = 1}  b_{\ell}  \widehat{\mathcal{A}}^{*\ell}(k,t) \\
& - \omega (k)^2 \sum^\infty_{\ell = 1}  b_{\ell} \widehat{\mathcal{A}}^{*\ell}(k,t) + k^2 \sum^\infty_{\ell = 1}  b_{\ell} \widehat{\mathcal{A}}^{*\ell}(k,t)
\end{align*}
and so 

\begin{equation} \label{error}
 \varepsilon^{\beta} \partial^2_t \widehat R (k,t)  = - \omega (k)^2 \sum^\infty_{\ell = 1}  b_{\ell}  \widehat{(\mathcal{A}+\varepsilon^{\beta} R)^{*\ell}}(k,t)
+ \omega (k)^2 \sum^\infty_{\ell = 1}  b_{\ell}  \widehat{\mathcal{A}}^{*\ell}(k,t)
+ \widehat{\mathop{\mathrm{Res}} (\mathcal{A})}(k,t) 
\end{equation}
where 

\[
\widehat{\mathop{\mathrm{Res}} (\mathcal{A})}(k,t) = \left( k^2 - \omega (k)^2 \right)\sum^\infty_{\ell = 1}  b_{\ell} \widehat{\mathcal{A}}^{*\ell}(k,t)
\]
stands for  the residual terms,
i.e., for the terms which we neglected within the leading order
 quasicontinuum  approximation.
In order to solve \eqref{error} with boundary conditions $ \widehat R (k,t)  = \widehat R (k+2\pi ,t) $ 
we have to modify $ \widehat{\mathcal{A}}(k,t)$ which decays to zero for $ |k | \to \infty $.
We multiply $\widehat{\mathcal{A}}(k,t)$ with a cut-off function $ \chi_{[-\pi/2,\pi/2]} $. The
segment from $[-\pi,\pi]$ is then extended periodically with a period $ 2 \pi $
to the entire real axis. Call the outcome $ \widehat{\mathcal{A}}_{\#}(k,t)$. 
In exactly the same way we modify $-k^2 $ in the residual, which is then denoted with $ -k_{\#}^2 $.

We will need an estimate of the difference  $ \widehat{\mathcal{A}}(k,t)-\widehat{\mathcal{A}}_{\#}(k,t) $ and the
error $ \varepsilon^{\beta} \widehat{R} = \widehat{u} -\widehat{\mathcal{A}_{\#}} $ which satisfies 
\begin{eqnarray} \label{nerror}
 \varepsilon^{\beta} \partial^2_t \widehat R (k,t)  = - \omega (k)^2 \sum^\infty_{\ell = 1}  b_{\ell}  \widehat{(\mathcal{A_{\#}}+\varepsilon^{\beta} R)^{*\ell}}(k,t)
+ \omega (k)^2 \sum^\infty_{\ell = 1}  b_{\ell}  \widehat{\mathcal{A}}^{*\ell}_{\#}(k,t)
+ \widehat{\mathop{\mathrm{Res}}_{\#} (\mathcal{A_{\#}})}(k,t)   \nonumber
\end{eqnarray}
where 
\[
\widehat{\mathop{\mathrm{Res}}_{\#}  (\mathcal{A_{\#}})}(k,t) = \left( k_{\#} ^2 - \omega (k)^2 \right)\sum^\infty_{\ell = 1}  b_{\ell} \widehat{\mathcal{A}}^{*\ell}_{\#} (k,t).
\]

Thus, in order to bound the error $R$, we will need estimates for the residual $ \displaystyle \mathop{\mathrm{Res}}_{\#}(\mathcal{A}_{\#} )$.
Since $L^2$ is closed under convolution  on $[-\pi,\pi]$
it turns out to be sufficient to make the estimates in $ L^2 $.
We have
\begin{lemma} \label{lemn3}
Fix $ \delta_0 > 0 $. Let
$
A \in C \left(\left[ 0, T_0\right], H^4\right) $
with $  
\sup_{T \in \left[ 0, T_0\right]} \sup_{ X \in \mathbb{R}}
\left| A ( X, T) \right| \leq \delta_0/2 $
be a solution \eqref{eq:qc}.
Then there exist $\varepsilon_0, C > 0$ such that for all $\varepsilon
\in (0, \varepsilon_0)$ we have
\[
\sup_{t\in [0, T_0/\varepsilon]} \left\|\widehat{\mathop{\mathrm{Res}}_{\#} (\mathcal{A}_{\#} )}
\right\|_{L^2_{\mathrm{per}}} < C \varepsilon^{7/2}
\]
and
\[
\sup_{t\in [0, T_0/\varepsilon]} \left\|\omega^{-1} \widehat{\mathop{\mathrm{Res}}_{\#} (\mathcal{A}_{\#} )}
\right\|_{L^2_{\mathrm{per}}} < C \varepsilon^{5/2}.
\]

\end{lemma}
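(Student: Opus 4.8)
The crux of the estimate is that the Fourier symbol $k^2-\omega(k)^2$ measuring the discreteness error vanishes to fourth order at $k=0$, while the rescaled profile $\widehat{\mathcal{A}}$ is concentrated in a band of width $\OO(\eps)$ around $k=0$; combining the two produces the extra powers of $\eps$. Accordingly, I would first record the $\eps$-scaling of the convolution powers and sum the series. Substituting $k_j=\eps K_j$ in the $(\ell-1)$-fold convolution gives, for every $\ell$,
\[
\widehat{\mathcal{A}}^{*\ell}(k)=\eps^{-1}\widehat{A}^{*\ell}(K),\qquad K=k/\eps,
\]
so that, after summation,
\[
\sum_{\ell=1}^{\infty}b_\ell\,\widehat{\mathcal{A}}^{*\ell}(k)=\eps^{-1}\widehat{B}(K),\qquad \widehat{B}=\mathcal{F}\big((\delta_0+A)^p-\delta_0^p\big).
\]
Since $\sup_X|A|\le\delta_0/2$ lies strictly inside the radius of convergence $\delta_0$, the series converges, and because $\delta_0+A$ is bounded away from zero the composition/algebra property of $H^4$ (valid as $4>1/2$) yields $B=(\delta_0+A)^p-\delta_0^p\in C([0,T_0],H^4)$ with $\sup_{T\in[0,T_0]}\|B(\cdot,T)\|_{H^4}\le C$, the constant depending only on $\delta_0,p,C_1$.

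Second, from $\omega(k)^2=2(1-\cos k)=k^2-\tfrac1{12}k^4+\OO(k^6)$ and $\omega(k)=2|\sin(k/2)|\ge c|k|$ I obtain, uniformly on $[-\pi/2,\pi/2]$,
\[
|k^2-\omega(k)^2|\le C\,k^4,\qquad \omega(k)^{-1}\,|k^2-\omega(k)^2|\le C\,|k|^3,
\]
the second estimate showing that the $k^{-1}$ singularity of $\omega^{-1}$ at the origin is more than absorbed by the quartic vanishing of the numerator. On the band $|k|\le\pi/2$ the cut-off is the identity ($k_\#^2=k^2$, $\widehat{\mathcal{A}}_\#=\widehat{\mathcal{A}}$), so inserting the two preceding displays and changing variables $k=\eps K$ gives
\[
\int_{|k|\le\pi/2}\!\big|\widehat{\mathop{\mathrm{Res}}}\big|^2\dd k\le C\!\int k^8\,\eps^{-2}|\widehat{B}(K)|^2\dd k=C\eps^7\!\int K^8|\widehat{B}(K)|^2\dd K\le C\eps^7\|B\|_{H^4}^2 ,
\]
which is the first bound $C\eps^{7/2}$; the identical manipulation with the extra factor $\omega^{-1}$ turns $k^8$ into $k^6$, i.e. $\eps^7$ into $\eps^5$, giving the second bound $C\eps^{5/2}$. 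Here the finiteness of $\int K^8|\widehat{B}|^2\dd K\le\|B\|_{H^4}^2$ is exactly what $A\in H^4$ provides, and the uniformity over $t\in[0,T_0/\eps]$ follows because $T=\eps t$ ranges over $[0,T_0]$, on which $\|B\|_{H^4}$ is uniformly bounded.

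Finally I would reinstate the cut-off, which I expect to be the only genuinely delicate step. On $|k|\le\pi/2$ the replacement of $\widehat{\mathcal{A}}^{*\ell}$ by $\widehat{\mathcal{A}}_\#^{*\ell}$ alters the integrand only through tails of $\widehat{A}$ and, still carrying the factor $k^2-\omega(k)^2\sim k^4$, is of higher order. On $\pi/2<|k|\le\pi$ one has $k_\#^2=0$, so the symbol is merely bounded, but there $\sum_\ell b_\ell\widehat{\mathcal{A}}_\#^{*\ell}$ is a tail controlled by
\[
\eps^{-1}\!\!\int_{|K|>\pi/(2\eps)}\!\!|\widehat{B}(K)|^2\dd K\le C\eps^{7}\|B\|_{H^4}^2 ,
\]
so this region contributes at the same order $\eps^{7/2}$ to the unweighted norm and, since there $\omega^{-1}\omega^2=\omega$ is bounded, at strictly higher order to the weighted norm; both claimed bounds follow. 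The main obstacle is thus the bookkeeping of the interplay between the convolution of the cut-off profile and the tail decay supplied by the $H^4$ regularity, while the quartic vanishing of $k^2-\omega(k)^2$ together with the $\eps$-concentration of $\widehat{\mathcal{A}}$ does all of the essential work.
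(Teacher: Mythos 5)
Your proposal is correct, and it rests on the same two pillars as the paper's proof: the quartic vanishing $k^2-\omega(k)^2=\OO(k^4)$ (cubic after dividing by $\omega$), and the concentration of $\widehat{\mathcal{A}}$ at frequencies $|k|=\OO(\eps)$ with $H^4$-type decay, so that each power of $k$ is worth a factor $\eps$ while the $L^2$ change of variables costs $\eps^{-1/2}$, giving $\eps^{4-1/2}=\eps^{7/2}$, resp.\ $\eps^{3-1/2}=\eps^{5/2}$. Where you genuinely differ is in how the nonlinear series is controlled. The paper (recalling Lemma 3.3 of \cite{CCPS12}) never leaves Fourier space: it bounds $\widehat{\mathcal{G}}=\sum_{\ell\geq 2}b_\ell\widehat{\mathcal{A}}_\#^{*\ell}$ of the already cut-off, periodized profile directly in the weighted norm
$\bigl(\int_{-\pi}^{\pi}|\widehat{\mathcal{G}}|^2(1+k^2/\eps^2)^4\,dk\bigr)^{1/2}=\OO(\eps^{-1/2})$
and then pulls out $\sup_k k^4/(1+k^2/\eps^2)^2\leq C\eps^4$. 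You instead resum the series in physical space into $B=(\delta_0+A)^p-\delta_0^p$ and invoke Moser-type composition in $H^4$, which is cleaner (no series-of-convolutions bookkeeping, and only the pointwise bound $|A|\le\delta_0/2$ plus $A\in H^4$ is needed, matching the hypotheses of the lemma exactly). The price is that you must reinstate the cut-off and periodization at the end, and that is the loosest point of your write-up: even on $|k|\le\pi/2$ one has $\widehat{\mathcal{A}}_\#^{*\ell}\ne\widehat{\mathcal{A}}^{*\ell}$ for $\ell\ge2$, since the periodic convolution of truncated profiles differs from the line convolution through tail terms, and identifying the annulus contribution $\pi/2<|k|\le\pi$ with the tail of $\widehat{B}$ is heuristic rather than proved. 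The orders you assign to these corrections (both $\OO(\eps^{7/2})$ or better, via the $H^4$ tail bound $\int_{|K|>\pi/(2\eps)}|\widehat B(K)|^2\,dK\le C\eps^{8}\|B\|_{H^4}^2$) are nevertheless the right ones; making them rigorous amounts to precisely the weighted convolution-algebra estimate for the cut-off object that the paper's formulation builds in from the start.
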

\begin{proof}
For completeness we recall the proof of \cite[Lemma 3.3]{CCPS12}.
We have that
$ \widehat{\mathcal{G}} = \sum^\infty_{\ell = 2}  b_{\ell} \widehat{\mathcal{A}}^{*\ell}_{\#} (k,t)    $   
satisfies 
\[
\sup_{t \in [0,T_0/\varepsilon]} \left(\int_{-\pi}^{\pi} |  \widehat{\mathcal{G}}(k,t) |^2 (1+ \frac{k^2}{\varepsilon^2})^4 dk\right)^{1/2} = \mathcal{O}(\varepsilon^{-1/2})
\] 
for $ \varepsilon \to 0 $, (note the index of $\mathcal{G}$ starts at $\ell=2$).  The loss of $\varepsilon^{-1/2}$ comes from the scaling properties of the $L^2$-norm.
Using that $k^2 - \omega (k)^2 = \mathcal{O}(k^4) $ then yields 
\begin{eqnarray*}
&& \left( \int_{-\pi}^{\pi} | (k^2 - \omega (k)^2) \widehat{\mathcal{G}}(k,t) |^2  dk \right)^{1/2}  \leq 
 C \left(\int_{-\pi}^{\pi} | k^4 \widehat{\mathcal{G}}(k,t) |^2  dk \right)^{1/2} \\ & \leq &
C  \sup_{k \in [-\pi,\pi]} \left| \frac{k^4}{(1+ \frac{k^2}{\varepsilon^2})^2} \right|    \left(\int_{-\pi}^{\pi} |  \widehat{\mathcal{G}}(k,t) |^2 (1+ \frac{k^2}{\varepsilon^2})^4 dk \right)^{1/2} \leq C \varepsilon^{7/2}.
\end{eqnarray*}
Since $b_1$ is independent of $\veps$ and since $k^2 - \omega (k)^2 = \mathcal{O}(k^4) $, it follows that

\begin{equation*}
 \left( \int_{-\pi}^{\pi} \left| (k^2 - \omega (k)^2) \sum^\infty_{\ell = 1}  b_{\ell} \widehat{\mathcal{A}}^{*\ell}_{\#} (k,t)  \right|^2  dk \right)^{1/2}  \leq C \varepsilon^{7/2}.
\end{equation*}
Applying the same argument as above with $ (k^2 - \omega (k)^2)/\omega(k) = \OO(k^3)$ yields

\begin{equation*}
 \left( \int_{-\pi}^{\pi} \left| \omega^{-1}(k)(k^2 - \omega (k)^2) \sum^\infty_{\ell = 1}  b_{\ell} \widehat{\mathcal{A}}^{*\ell}_{\#} (k,t)  \right|^2  dk \right)^{1/2}  \leq C \varepsilon^{5/2}.
\end{equation*}
\end{proof}
We close this section with an estimate for the difference  $ \widehat{\mathcal{A}}(k,t)-\widehat{\mathcal{A}}_{\#}(k,t) $.
\begin{lemma} \label{Lemma4}
Let $ A \in C([0,T_0],H^4) $ then 
\[
\sup_{t \in [0,T_0/\varepsilon]} \sup_{n \in \mathbb{Z}} | \mathcal{A}(n,t) - {\mathcal{A}}_{\#}(n,t) | \leq C \varepsilon^{7/2}.
\]
\end{lemma}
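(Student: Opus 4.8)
The plan is to reduce the supremum over $n$ to an $L^1$-estimate of $\widehat{\mathcal{A}}$ on the high-frequency tail, and then to extract the decay rate from the $H^4$-regularity of $A$ after undoing the long-wave scaling. First I would identify the difference in Fourier space. By construction $\widehat{\mathcal{A}}_{\#}$ agrees with $\widehat{\mathcal{A}}$ on $[-\pi/2,\pi/2]$ and vanishes on $(\pi/2,\pi]\cup[-\pi,-\pi/2)$, so the inverse transform on $\mathbb{Z}$ gives
\[
\mathcal{A}_{\#}(n,t) = \int_{-\pi}^{\pi} \widehat{\mathcal{A}}_{\#}(k,t)\, e^{ikn}\, dk = \int_{-\pi/2}^{\pi/2} \widehat{\mathcal{A}}(k,t)\, e^{ikn}\, dk,
\]
while, by the continuous inversion formula evaluated at $X=\varepsilon n$, one has $\mathcal{A}(n,t) = \int_{-\infty}^{\infty} \widehat{\mathcal{A}}(k,t)\, e^{ikn}\, dk$. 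Subtracting, the low-frequency parts cancel exactly and only the tail survives:
\[
\mathcal{A}(n,t) - \mathcal{A}_{\#}(n,t) = \int_{|k|>\pi/2} \widehat{\mathcal{A}}(k,t)\, e^{ikn}\, dk.
\]

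Next I would bound the supremum over $n$ by the $L^1$-norm of this tail, using $|e^{ikn}|=1$,
\[
\sup_{n \in \mathbb{Z}} \left|\mathcal{A}(n,t) - \mathcal{A}_{\#}(n,t)\right| \leq \int_{|k|>\pi/2} |\widehat{\mathcal{A}}(k,t)|\, dk,
\]
and then undo the scaling $\widehat{\mathcal{A}}(k,t) = \varepsilon^{-1}\widehat{A}(k/\varepsilon, T)$. The substitution $K = k/\varepsilon$ cancels the prefactor $\varepsilon^{-1}$ against the Jacobian $dk = \varepsilon\, dK$ and turns the integration region into $|K| > \pi/(2\varepsilon)$, so that
\[
\int_{|k|>\pi/2} |\widehat{\mathcal{A}}(k,t)|\, dk = \int_{|K|>\pi/(2\varepsilon)} |\widehat{A}(K,T)|\, dK.
\]

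Finally I would read off the rate from the $H^4$-bound by Cauchy--Schwarz with the weight $(1+K^2)^2$, splitting $|\widehat{A}| = \big[(1+K^2)^2 |\widehat{A}|\big]\cdot (1+K^2)^{-2}$. The first factor is controlled in $L^2$ by $\|A(\cdot,T)\|_{H^4}$, which is bounded uniformly for $T\in[0,T_0]$, and the second yields
\[
\left(\int_{|K|>\pi/(2\varepsilon)} (1+K^2)^{-4}\, dK\right)^{1/2} \leq C\varepsilon^{7/2},
\]
because $(1+K^2)^{-4}\sim K^{-8}$ and the tail integral from $\pi/(2\varepsilon)$ scales like $\varepsilon^{7}$. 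Since $T=\varepsilon t\in[0,T_0]$ when $t\in[0,T_0/\varepsilon]$, this produces the claimed $\varepsilon^{7/2}$ uniformly in $n$ and $t$. I expect the only delicate point to be the scaling bookkeeping: confirming that the $\varepsilon^{-1}$ prefactor cancels against the Jacobian and that precisely the four derivatives of regularity assumed here ($A\in H^4$) are what generate the exponent $7/2$; one fewer derivative would degrade the rate and would fail to match the residual estimates of Lemma~\ref{lemn3}.
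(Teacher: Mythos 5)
Your proof is correct and takes essentially the same approach as the paper: both identify the difference as the high-frequency Fourier tail $\int_{|k|>\pi/2}\widehat{\mathcal{A}}(k,t)e^{ikn}\,dk$ and bound it by Cauchy--Schwarz against the weight $(1+K^2)^{2}$ supplied by the $H^4$ regularity. The only (cosmetic) difference is bookkeeping: you rescale $K=k/\varepsilon$ first, so the rate appears as a single tail integral $\bigl(\int_{|K|>\pi/(2\varepsilon)}(1+K^2)^{-4}\,dK\bigr)^{1/2}\le C\varepsilon^{7/2}$, whereas the paper keeps the unscaled variable and obtains the same $\varepsilon^{7/2}$ as a sup factor of order $\varepsilon^{4}$ times a scaled weighted $L^2$-norm of order $\varepsilon^{-1/2}$.
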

\begin{proof}
We have 
\[
\mathcal{A}(n,t) = \int_{-\infty}^{\infty} \varepsilon^{-1}\widehat{A}(\frac{k}{\varepsilon}, \varepsilon t) e^{ikn} dk
\qquad
\textrm{and} 
\qquad
\mathcal{A}_{\#}(n,t) = \int_{-\pi/2}^{\pi/2} \varepsilon^{-1}\widehat{A}(\frac{k}{\varepsilon}, \varepsilon t) e^{ikn} dk.
\]
For the difference we obtain
\begin{eqnarray*}
&& \left| \int_{\pi/2}^{\infty} \varepsilon^{-1}\widehat{A}(\frac{k}{\varepsilon}, \varepsilon t) e^{ikn} dk \right| \\
& \leq & C  \sup_{k \in [\pi/2,\infty]} \left| \frac{1}{(1+ \frac{k^2}{\varepsilon^2})^2} \right|  \left(\int_{-\infty}^{\infty} |  \varepsilon^{-1}\widehat{A}(\frac{k}{\varepsilon}, \varepsilon t) |^2 (1+ \frac{k^2}{\varepsilon^2})^4 dk \right)^{1/2} \leq C \varepsilon^{7/2}
\end{eqnarray*}
uniformly in $ n $. The loss of $\varepsilon^{-1/2}$ again comes from the scaling properties of the $L^2$-norm.

\end{proof}

\section{The error estimates}
\label{sec:error}

It remains to bound the solutions of \eqref{nerror}. In  accordance with Lemma \ref{lemn3} we choose $\beta = 3/2$.
 We proceed as in Section \ref{sec:qc_exist}  using energy
estimates.

From Section~\ref{sec:residual} we know

\[
\partial^2_t {\widehat{R}} = - \omega^2 \cdot \left( \sum^\infty_{\ell = 0}
  b_{\ell} (\left(  {\widehat{\mathcal{A}_{\#}}} + \varepsilon^\beta {\widehat{R}}\right)^{*\ell} - {\widehat{\mathcal{A}_{\#}}}^{*\ell})\right)\varepsilon^{-\beta } +
  \varepsilon^{-\beta } \widehat{\mathop{\mathrm{Res}}_{\#}(\mathcal{A}_{\#})}.
\]
Define the energy

\[
E_0 =  \int_{-\pi}^{\pi} |\omega^{-1} \partial_t {\widehat{R}}|^2 dk/2  + \int_{-\pi}^{\pi} |\omega^{-1} b_1 {\widehat{R}}|^2 dk/2.
\]
Making use of the fact

\[
\textrm{Re }\int_{-\pi}^{\pi} \omega^{-2} \overline{\partial_t {\widehat{R}}} \partial^2_t {\widehat{R}} dk = \partial_t \int_{-\pi}^{\pi}
|\omega^{-1} \partial_t {\widehat{R}}|^2 dk/2.
\]
we can compute

\begin{align*}
\pa_t E_0  = & \mathrm{Re} \left\{ - \varepsilon^{-\beta } \int_{-\pi}^{\pi} (\overline{\partial_t
{\widehat{R}}}) \left( \sum^\infty_{\ell = 2} b_{\ell}  (\left(  {\widehat{\mathcal{A}_{\#}}} + 
\varepsilon^\beta {\widehat{R}}\right)^{*\ell} - {\widehat{\mathcal{A}_{\#}}}^{*\ell})\right) dk  \right.
 \\
&  \left.  \vphantom{ \sum^\infty_{\ell = 0}}  + \varepsilon^{-\beta} \int_{-\pi}^{\pi}
\left( \omega^{-1} \overline{\partial_t {\widehat{R}}} \right) \left( \omega^{-1}\widehat{\mathop{\mathrm{Res}}_{\#}(\mathcal{A}_{\#})}
\right)dk. \right\}
\end{align*}
Note that the autonomous linear terms have canceled, explaining why the sum begins at $\ell=2$. Recall that the application of $ \omega^{-1} $ to the residual terms is well defined, see Lemma~\ref{lemn3}. We show below (see e.g.~\eqref{CE})
that the application of $ \omega^{-1} $ to $\pa_t \hat{R}$ and $\hat{R}$ is also well defined.
%
%
Using the Plancherel's identity allows us to rewrite 

\[
\varepsilon^{-\beta } \textrm{Re } \int_{-\pi}^{\pi} (\overline{\partial_t
{\widehat{R}}}) \left( \sum^\infty_{\ell = 2} b_{\ell}  (\left(  {\widehat{\mathcal{A}_{\#}}} + \varepsilon^\beta {\widehat{R}}\right)^{*\ell} - {\widehat{\mathcal{A}_{\#}}}^{*\ell})\right) dk
\] 
as 

\begin{eqnarray*}
&& \varepsilon^{-\beta } \sum_{n \in \mathbb{Z}} (\partial_t {{R_n}}) \left( \sum^\infty_{\ell = 2} b_{\ell} \left( \left(
      A_n+\varepsilon^\beta {{R_n}}\right)^{\ell} - A_n^{\ell}\right) \right) \\
&= & \varepsilon^{-\beta } \sum_{n \in \mathbb{Z}}\sum^\infty_{\ell = 2} b_{\ell} (\partial_t {{R_n}}) \sum^{\ell}_{j = 1}  
\begin{pmatrix} l \\ j \end{pmatrix}{ (\veps^{\beta} R_n   })^j A_n^{{\ell}-j} \\
&= &\sum_{n \in \mathbb{Z}}\sum^\infty_{\ell = 2} b_{\ell}
\sum^{\ell}_{j = 1}
\begin{pmatrix} l \\ j \end{pmatrix}
\frac{1}{j+1} \partial_t \left( {{R_n}}^{j+1}\right) \veps^{\beta(j-1)} A_n^{{\ell}-j} \\
&= &\partial_t \sum_{n \in \mathbb{Z}}\left( \sum^\infty_{\ell = 2} b_{\ell} \sum^{\ell}_{j = 1}
\begin{pmatrix} l \\ j \end{pmatrix} \frac{1}{j+1} ({{R_n}}^{j+1})
A_n^{{\ell}-j}\veps^{\beta(j-1)} \right) \\
&&-\sum_{n \in \mathbb{Z}}\sum^\infty_{\ell = 2} b_{\ell} \sum^{\ell}_{j = 1} 
\begin{pmatrix} l \\ j \end{pmatrix}
\frac{1}{j+1} ({{R_n}}^{j+1})\partial_t (A_n^{{\ell}-j})\veps^{\beta(j-1)} ,
\end{eqnarray*}
where $A_n$ (resp. $R_n$) is the inverse discrete Fourier transform of
$\widehat{\mathcal{A}_{\#}}$ (resp. $\widehat{R}$) evaluated at $n$.  This motivates the definition of a modified energy

\[
E_1 = E_0 + \sum_{n \in \mathbb{Z}}\left( \sum^\infty_{\ell = 2} b_{\ell} \sum^{\ell}_{j = 1}
\begin{pmatrix} l \\ j \end{pmatrix} \frac{1}{j+1} ({{R_n}}^{j+1})
A_n^{{\ell}-j} \veps^{\beta(j-1)} \right)
\]
which by construction satisfies,
\[
\partial_t E_1 = H_0 +  \varepsilon^{-\beta} \int_{-\pi}^{\pi}
\left( \omega^{-1} \overline{\partial_t {\widehat{R}}} \right) \left( \omega^{-1}\widehat{\mathop{\mathrm{Res}}_{\#}(\mathcal{A}_{\#})}
\right)dk,
\]
where
\[
H_0 = \sum_{n \in \mathbb{Z}}\sum^\infty_{\ell = 2} b_{\ell} \sum^{\ell}_{j = 1} 
\begin{pmatrix} l \\ j \end{pmatrix}
\frac{1}{j+1} ({{R_n}}^{j+1})\partial_t (A_n^{{\ell}-j})\veps^{\beta(j-1)}.
\]
%
Let  $\| A(\cdot,T)  \|_{\ell^2} \leq C_2$ and  $\| \pa_t A(\cdot,T)  \|_{\ell^2} \leq C_3 \veps $ and
$\| R(\cdot,t) \|_{\ell^2} \leq  C_E$ with $C_E$ defined below in~\eqref{CE}. Then

\begin{align*} \displaystyle
H_0  \leq  & \| R \|_{\ell^2}  \sum^\infty_{\ell = 2} b_{\ell} \sum^{\ell}_{j = 1} 
\begin{pmatrix} l \\ j \end{pmatrix}
\frac{1}{j+1}  \| R \|^j _{\ell^2}       \| \partial_t (A^{{\ell}-j} ) \|_{\ell^2} \veps^{\beta(j-1)}  \\
& \leq  \| R\|_{\ell^2} \left( \veps  \| R \| _{\ell^2} C_4  + \veps^{\beta + 1} \| R \|^2 _{\ell^2}  C_5(C_E)     \right)
\end{align*}
where 

\begin{align*}
C_4 &:=  \sum^\infty_{\ell = 2} b_{\ell}
\frac{1}{2}     (l-1) C_2^{{\ell}-2}  C_3,      
 \\   
C_5(C_E) &:=  \sum^\infty_{\ell = 2} b_{\ell} \sum^{\ell}_{j = 2} 
\begin{pmatrix} l \\ j \end{pmatrix}
\frac{1}{j+1}   C_E^{j-2}  (l-j) C_2^{{\ell}-j-1}  C_3   \veps^{\beta(j-2)}.     
\end{align*}
The first series is convergent for $C_1$ (and hence $C_2$) sufficiently small.
By changing indices, factorizing and ignoring a finite number of terms in the second series 
results in the expression $C_5(C_E) \approx \delta_0^p( 1 + ( C_2 + \veps^\beta C_E)/( \delta_0))^p$
and thus the series is convergent
if $C_1$ and $\veps$ are sufficiently small. 
Due to the Plancherel identity 
the energy $E_1$ is an upper bound for the squared $L^2$-norm for $ C_1 >
0$ sufficiently small.
Then since $\omega$ is bounded and since $L^2\subset L^1$ on bounded
domains (as a consequence $ L^2 $ is closed under convolution),  we find

\begin{align*}
H_0  \leq   ( C_4 + 1) E_1 \veps
\end{align*}
where we pick $\veps>0$ small enough such that $ \veps^\beta C_5(C_E) C_E \leq 1$. Thus
\[
\partial_t E_1 \leq ( C_4 + 1 + C_6) E_1 \veps + C_6  \veps 
\]
with constants $ C_j $ independent of $ \varepsilon $. For the inequality above we used the fact that
\[
\left|\varepsilon^{-\beta} \int_{-\pi}^{\pi}
\left( \omega^{-1} \overline{\partial_t {\widehat{R}}} \right) \left( \omega^{-1}\widehat{\mathop{\mathrm{Res}}_{\#}(\mathcal{A}_{\#})}
\right)dk \right|
\leq C_6 \varepsilon
\left( 1 +  \int_{-\pi}^{\pi}
|\omega^{-1} \partial_t {\widehat{R}}|^2 dk
\right),
\]
which is a consequence of the Cauchy-Schwarz inequality, Lemma~\ref{lemn3} and the fact that $\sqrt{x} < 1 + x$.
Hence by Gronwall's inequality we have
\begin{equation} \label{CE}
\sup_{t\in \left[ 0,T_0/\varepsilon\right]} E_1 (t) 
\leq \frac{C_6}{C_4+1+C_6} e^{(C_4+1+C_6)T_0} - \frac{C_6}{C_4+1+C_6}  = : C_E^2
 \end{equation}
for all $ t \in [0,T_0/\veps]$ where we set $u(\cdot,0)=A(\veps \, \cdot,0)$ such that $E_1(0)=0$. 
As a direct consequence we have   
  \[
\sup_{t\in [0, T_0/\varepsilon]} \sup_{n \in \mathbb{Z}} \left|
  u_n(t) - \mathcal{A}_{\#}(n, t)\right| < C \varepsilon^{3/2}.
\]
Combining this estimate with the one from Lemma \ref{Lemma4} gives the assertion of 
Theorem \ref{th1}, completing our proof.


\qed 

\section{Shock Formation in Granular Media}
\label{sec:numerics}

\begin{figure} 
  \centerline{ 
\epsfig{file=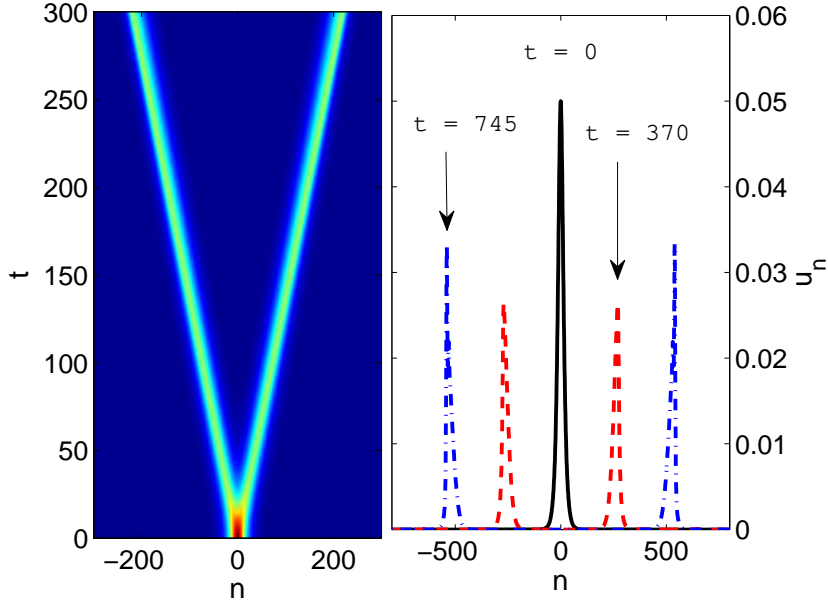,width=.9 \textwidth}
}
\caption{Development of a shock-like structure with a smooth and localized initial condition of the granular crystal model~\eqref{law}
with $p=3/2$ and $\delta = 0.1$. A space-time contour plot  is shown on the left and spatial profiles  at time $t=0$ (solid black line),
$t=370$ (red dashed line) and $t=750$ (blue dashed-dot line) are shown in the right panel.
In both panels it can be seen that the pulse separates into two counter-propagating waves. A zoom of the wave before and after the point of shock development is shown 
in Fig.~\ref{fig:oscillate}. }
\label{fig:shock}
\end{figure}


We now briefly discuss some analytical and numerical observations
that stem from the leading order quasicontinuum approximation developed
in Theorem \ref{th1}.
The continuum model

\begin{equation*} 
\partial^2_T A = \partial^2_{X} ((\delta_0+ A)^p)
\end{equation*}
can be written as a system of conservation laws

\begin{equation} \label{psystem}
\left .
\begin{array}{cc}
 \pa_T A - \pa_X v &= 0 \\
\pa_T v - \pa_X [  (\delta_0 + A)^p ] &=0
\end{array}
\right\}
\end{equation}
These conservation laws have the form of a so-called p-system~\cite{LeVeque,Smoller}.
Defining 

$$U = \begin{pmatrix}  A \\ v  \end{pmatrix},  \qquad  F(U) = \begin{pmatrix}  -v \\ -(\delta_0 + A)^p   \end{pmatrix}$$ 
we can express ~\eqref{psystem} as,

$$ \pa_T U + \pa_X F(U) = 0$$
or equivalently as 

\begin{equation} \label{advection}
 \pa_T U +  DF(U) \pa_X U = 0 
\end{equation}
where 
$$ DF(U) = \begin{pmatrix}
    0  & -1 \\ -p( \delta_0 + A)^{p-1} &0 
   \end{pmatrix} 
   $$
which has the eigenvalues $ \lambda_{\pm}(U) = \pm \sqrt{ p( \delta_0 + A)^{p-1}   } $. Thus, solutions
of \eqref{eq:qc} will consist of two counter-propagating waves traveling with velocity $\pm \sqrt{ p( \delta_0 + A)^{p-1}   } $.
Since the wave speed will depend on the amplitude of the solution, 
bell-shaped initial data will deform and steepen and 
a shock wave will form in finite time.
Discrete shock-like structures (which we will simply call shock waves) have been studied in granular media, e.g. in 
homogeneous and periodic chains with $p=3/2$
and $\delta_0 = 0$ \cite{HerboldPRE07,Molinari2009}. In those works, however, the
shock wave is generated by applying a velocity to a single bead \cite{HerboldPRE07} or by imparting velocity
to the end of the chain continuously \cite{Molinari2009}. In this paper, the mechanism for the development of the shock wave is fundamentally different.
It manifests from an arbitrary non-monotonically-increasing initial 
strain profile under precompression and given a sufficiently long time to develop.

A natural question is if the shock wave formation predicted by the continuum model~\eqref{eq:qc} is also
present in full system~\eqref{law}. Theorem~\ref{th1} no longer applies in this case, as the
the shock wave violates the required smoothness condition. Nonetheless, we 
carry out a numerical simulation to address the relevant
question with a smooth initial condition. 
Figure~\ref{fig:shock} shows the development of a shock wave in the discrete
model~\eqref{law} with the initial condition 
$u_n(0) = a \, \mathrm{sech}( b \epsilon n), v_n(0) = 0$ where $a,b\in\R$ are shape parameters.
The wave propagation closely follows the theoretical expectation on the basis 
of the leading order quasicontinuum approximation. Moreover, the predicted velocity
$\pm \sqrt{ p( \delta_0 + A)^{p-1}   } $ proved to be very accurate (for example, there was less than a $\%0.01$ relative error
in the case shown in Fig.~\ref{fig:shock}).

We would like to make a direct comparison of the solutions of the continuum model and the discrete model. However, one has to be careful
when using numerical approximations of the continuum model. For example, if one uses a finite difference
 approximation for a spatial discretization of~\eqref{eq:qc}, then we arrive at a model identical to the granular model~\eqref{law}.
One could use other numerical schemes, such as those based on adding artificial dispersion \cite{LeVeque}, but we will
proceed in an alternative way to predict the development of a shock wave.  Using the velocity relationship $\pm \sqrt{ p( \delta_0 + A)^{p-1}   }$, we can construct the profile of the continuum model
for an arbitrary time, starting with the left or right wave  (once they are separated) as an initial profile, see Fig.~\ref{fig:oscillate}.
A numerical computation is used to separate the profiles (i.e. we simulate~\eqref{eq:qc} with a finite difference method until separation but before the development of
any shock wave, thus avoiding any issues with smoothness). The continuum model predicts a shock wave
for any time past the point of non-single-valuedness, as shown in right panel Fig.~\ref{fig:oscillate}.
%

In FPU lattices, it is well known that dispersive shocks can develop, in which microscopic oscillations spread out in
space and time \cite{Herrmann10}. From Fig.~\ref{fig:oscillate} one clearly sees 
near the point of wave breaking the development of such oscillations, which
are absent in the continuum model \cite{Herrmann10}. Thus, it would be 
relevant to extend works like \cite{Herrmann10} in order to better understand shock waves
in the granular crystal model~\eqref{law}.


\begin{figure} 
  \centerline{ 
\epsfig{file=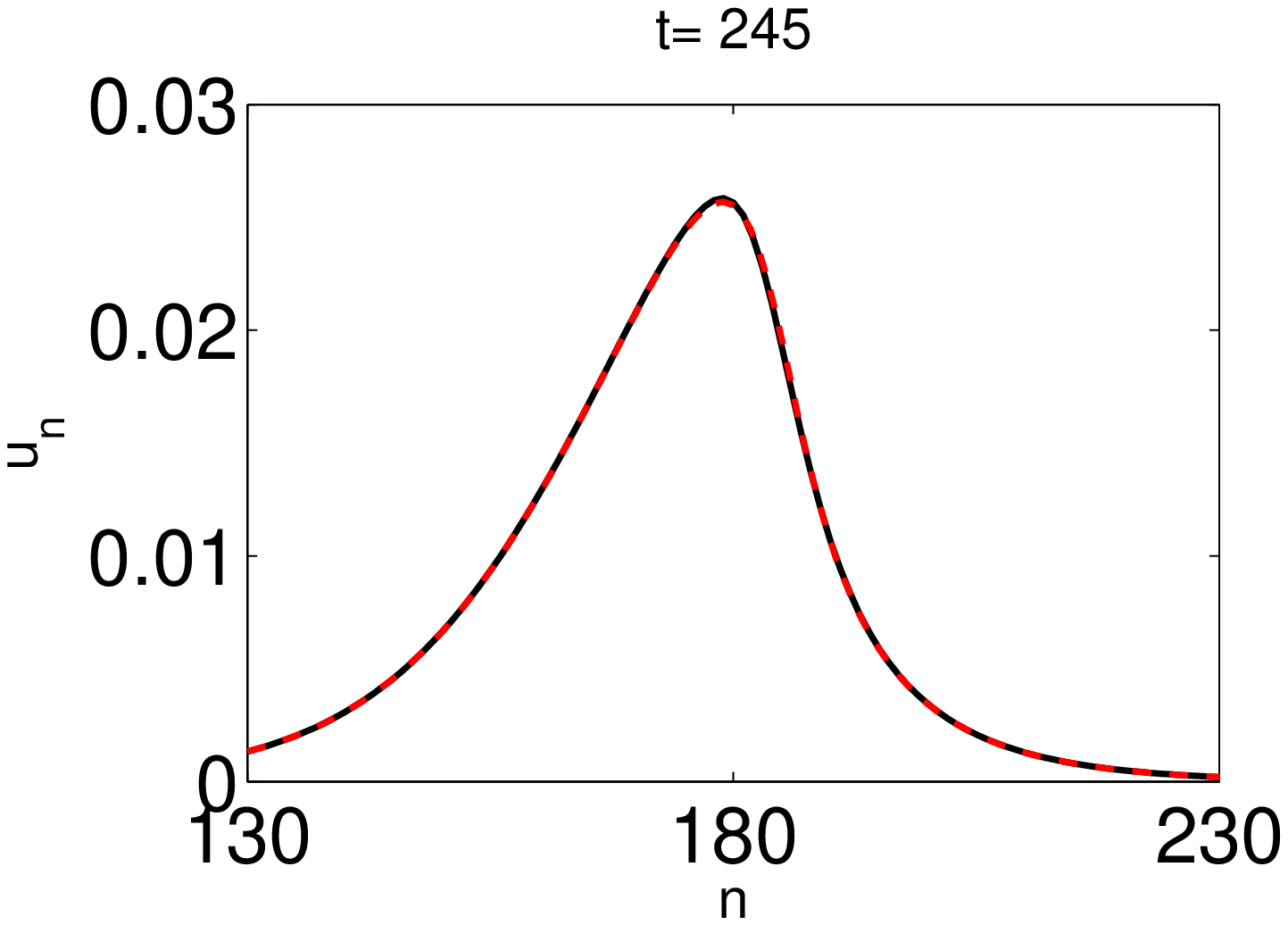,width=.33 \textwidth, height = .15 \textheight}
\epsfig{file=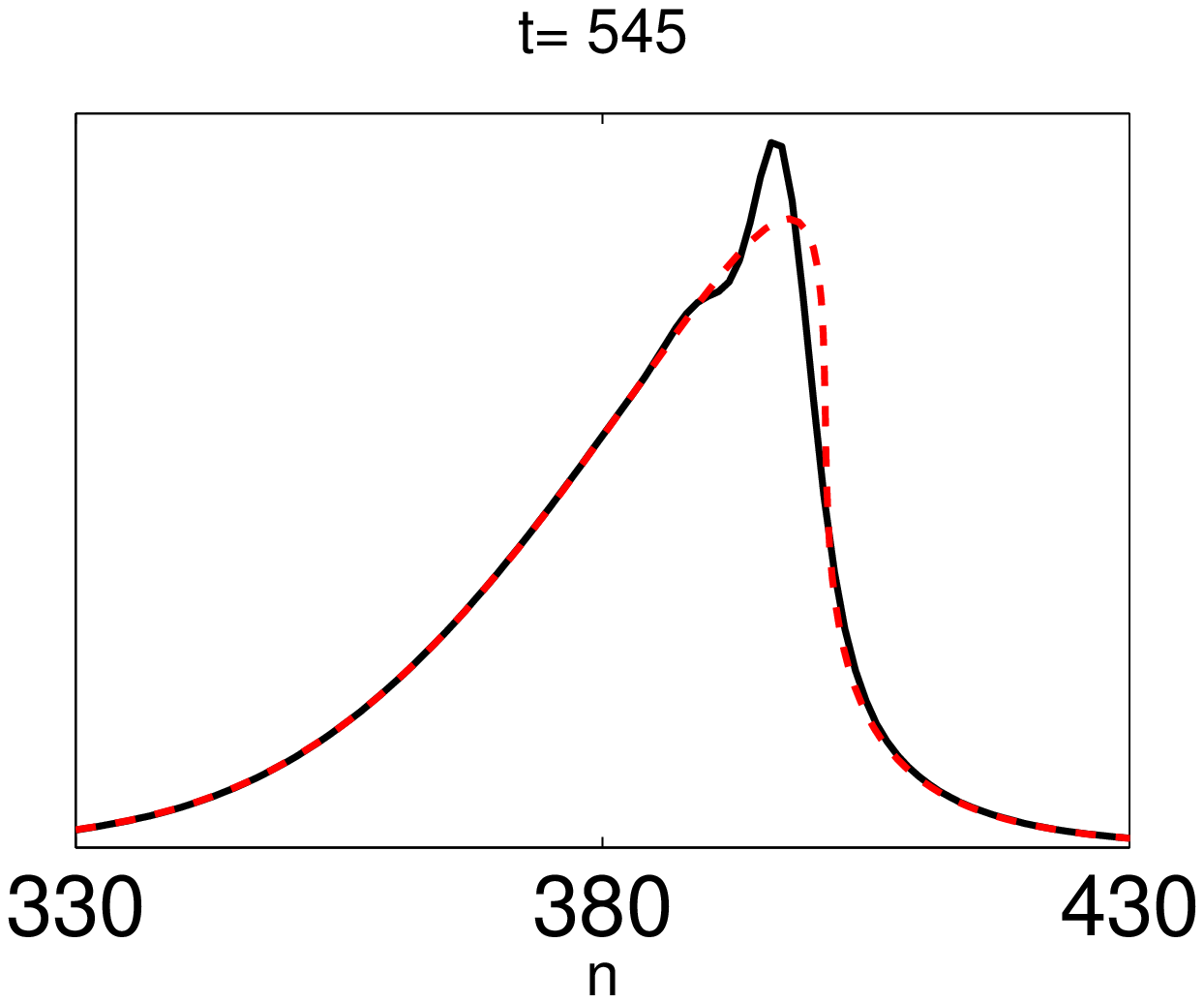,width=.33 \textwidth, height = .15 \textheight}
\epsfig{file=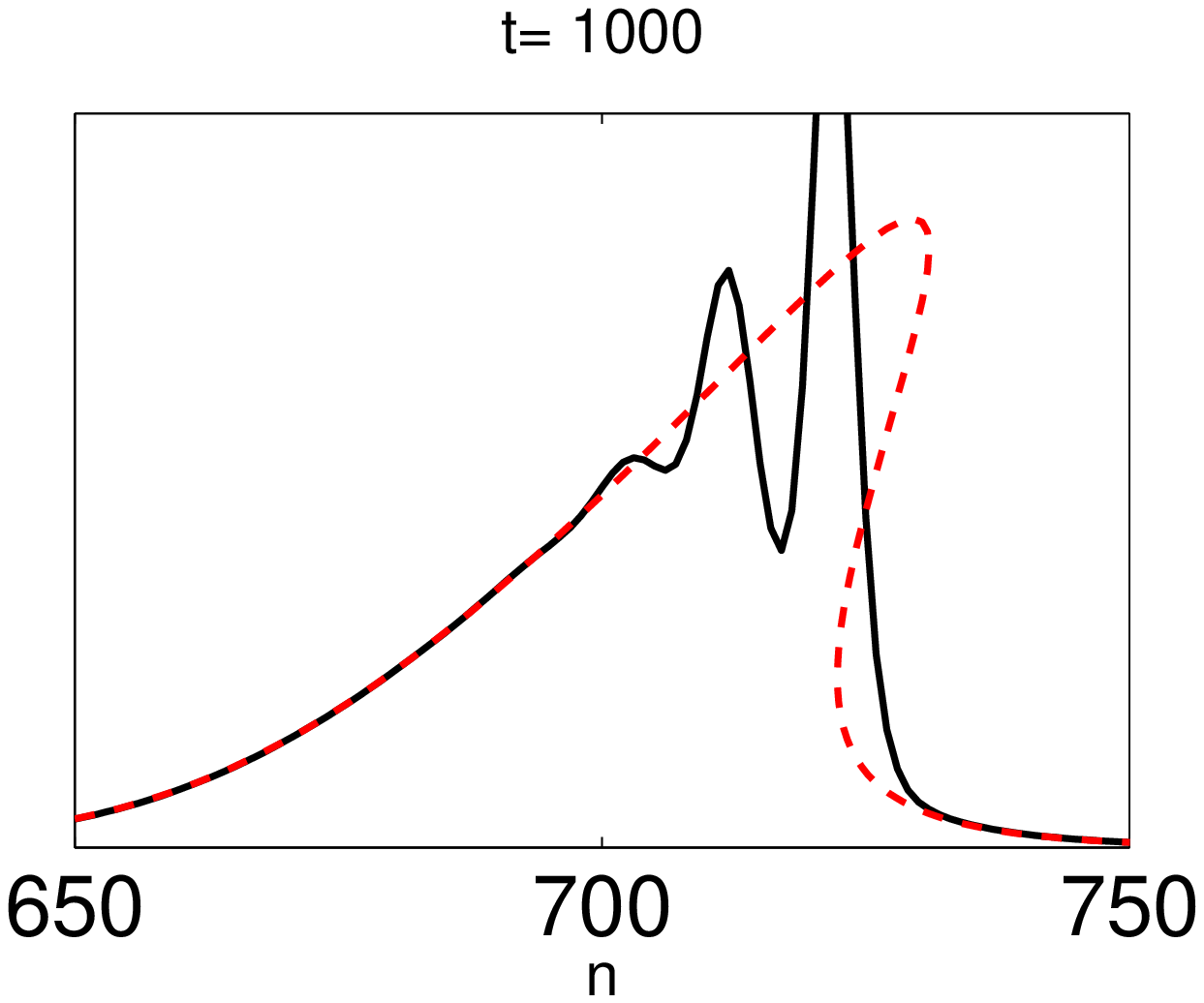,width=.33 \textwidth, height = .15 \textheight}
}
\caption{A zoom of the wave before (left) and at the approximate moment of (middle) and after (right) the point of shock development. 
The dark solid line is from the direct simulation of the discrete model, and the dashed red line is the prediction based on the 
continuum model. Notice that past the point of expected wave breaking, the lattice model develops microscopic oscillations.   The profile at $t=145$ was used to construct the red-dashed line
using the velocity relationship  $\sqrt{ p( \delta_0 + A)^{p-1}   }$, see text. Before the development of the shock,
this prediction is very accurate, see e.g. the left panel. }
 \label{fig:oscillate}
\end{figure}

\section{Justification of the KdV and NLS approximation}
\label{sec:nls_kdv}

In this section we would like to contrast the previous result with
approximation results for the KdV and NLS approximation. The major
difference lies in the ratio between the amplitude and the
precompression. For the KdV and NLS approximation this ratio is
$\mathcal{O} (\varepsilon^2)$ resp.\ $\mathcal{O} (\varepsilon)$  where $0
< \varepsilon \ll 1$ is the small perturbation parameter, whereas for
the quasicontinuum  approximation the ratio is $\mathcal{O} (1)$, i.e., of a
comparable order.
%
Since Eq.~\eqref{law2} is exactly of the form of the FPU systems considered in
\cite{CCPS12,Schn10}  the approximation theorems of
these papers also apply here. The subsequent solutions $u_n$ will be $\mathcal{O} (\varepsilon^2)$, resp.\ $\mathcal{O}
(\varepsilon)$  and hence will always live in the ball of the convergence with radius $\delta_0 $ 
if the perturbation parameter $ 0 < \varepsilon \ll 1 $ is chosen to be  sufficiently small.

Even spatially periodic arrangements can
be considered here, namely
\begin{equation} \label{eq11}
\partial^2_t u_n = a_{n+1} (u^p_{n+1}) - 2a_n (u^p_n) + a_{n-1} (u^p_{n-1})
\end{equation}
with $a_n = a_{n+N}$ for a fixed $N$. 

We formulate the relevant approximation theorems in the homogeneous case $N=1$.
For the KdV approximation we have,

\begin{theorem}
Let $A\in C(\left[ 0,T_0\right], H^6)$ be a solution of the KdV
equation $\partial_T A = \nu_1 \partial^3_{ X} A +
\nu_2 \partial_{ X} (A^2)$ with suitable chosen coefficients
$\nu_1, \nu_2 \in \mathbb{R}$.
Then there exist $\varepsilon_0 > 0, \, C>0$ such that for all
$\varepsilon \in (0, \varepsilon_0)$ we have solutions $(u_n)_{n\in
  \mathbb{Z}}$ of \eqref{eq11} with
\[
\sup_{t \in [0, T_0/\varepsilon^3]} \sup_{n \in \mathbb{N}} \left|
  u_n(t) - \psi_n (t) \right| \leq C \varepsilon^{5/2},
\]
where
\[
\psi_n (t) = \varepsilon^2 A \left( \varepsilon (n-\omega'_1 (0) t),
  \varepsilon^3 t\right) 
\]
with $\omega_1(k)^2 = \omega(k)^2 b_1$.

\end{theorem}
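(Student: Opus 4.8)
The plan is to recognize that in the homogeneous case the lattice \eqref{eq11} is identical to \eqref{law2}, which is precisely of the FPU type treated in \cite{CCPS12,Schn10}; the stated result then follows once the hypotheses of those KdV approximation theorems are verified. First I would record the effective structure of the nonlinearity: writing $W'(u) = \sum_{\ell} b_\ell u^\ell$ as in Section~\ref{sec:derive}, the right-hand side of \eqref{law2} is the discrete second difference of a function of $u_n$ that is analytic for $|u_n| < \delta_0$, with quadratic leading term $b_2 u_n^2$. The linearization about $u\equiv 0$ has dispersion relation $\omega_1(k)^2 = \omega(k)^2 b_1$, and since $\omega(k)^2 = 2(1-\cos k) = k^2 - k^4/12 + \OO(k^6)$ we get $\omega_1(k) = \sqrt{b_1}\,(|k| - |k|^3/24 + \OO(|k|^5))$ for small $k$. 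This produces the long-wave sound speed $c = \omega_1'(0) = \sqrt{b_1}$ together with a cubic dispersive correction; these are exactly the two ingredients that generate a KdV modulation equation, with $\nu_1$ fixed by the $|k|^3$ coefficient and $\nu_2$ fixed by the quadratic coefficient $b_2$.

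Next I would insert the KdV ansatz $\psi_n(t) = \varepsilon^2 A(\varepsilon(n - ct), \varepsilon^3 t)$ into \eqref{law2} and match orders in $\varepsilon$. Because the amplitude scales like $\varepsilon^2$, the quadratic nonlinearity enters the leading balance while the cubic and higher terms $b_3 u^3, b_4 u^4, \dots$ sit at strictly higher orders in $\varepsilon$; the balance reproduces $\partial_T A = \nu_1 \partial_X^3 A + \nu_2 \partial_X(A^2)$ with the coefficients identified above. The residual obtained by substituting $\psi$ into \eqref{law2} can then be bounded in the weighted spaces of Section~\ref{sec:fourier}, exactly as in \cite{CCPS12,Schn10}, with the $H^6$ regularity of $A$ being what is needed to render this residual small enough to yield the $\OO(\varepsilon^{5/2})$ error over the KdV time scale $t \in [0, T_0/\varepsilon^3]$.

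The one structural difference from the model nonlinearities of \cite{CCPS12,Schn10} is that here the nonlinearity is an infinite convergent series rather than a fixed polynomial, so the remaining point is to check that this causes no difficulty. The key observation, already used in Sections~\ref{sec:qc_exist} and \ref{sec:error}, is that the relevant solutions obey $u_n = \OO(\varepsilon^2)$ and therefore remain strictly inside the convergence radius $\delta_0$ once $\varepsilon$ is small; consequently the tail $\sum_{\ell \geq 3} b_\ell u^\ell$ is uniformly summable in $\varepsilon$ and contributes only genuinely higher-order terms, so the energy estimates closing the error equation are identical in form to the polynomial case. I expect this step — verifying that the series nonlinearity is dominated by its quadratic part uniformly in $\varepsilon$ and does not degrade the error order — to be the main (though essentially routine) obstacle; once it is in place the approximation theorems of \cite{CCPS12,Schn10} apply directly and deliver the stated estimate.
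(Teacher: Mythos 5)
Your proposal is correct and follows essentially the same route as the paper: the paper's proof simply invokes \cite[Theorem 3.1]{CCPS12} (or \cite{SW99FPU}) after noting that \eqref{law2} is exactly of the FPU form treated there, and that the $\OO(\varepsilon^2)$ amplitude keeps solutions inside the convergence ball of radius $\delta_0$ so the Heaviside/series structure plays no role. Your additional detail on the dispersion relation, coefficient identification, and the series tail is a correct fleshing-out of what the paper leaves to the cited references.
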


\begin{proof}
The proof follows trivially from \cite[Theorem 3.1]{CCPS12} or \cite{SW99FPU}.  We note there is no gap-opening in the small amplitude limit
(i.e. the Heaviside function in the potential will play no role). 
\end{proof}


The  theorem can be generalized easily to an approximation theorem for two decoupled KdV equations 
describing  counter-propagating waves, cf. \cite{SW99FPU}.

For the NLS approximation we have,
\begin{theorem}
Let $A \in C ([0, T_0], H^{19})$ be a solution of the NLS equation
$\partial_T A = i \nu_{1} \partial_X A + i \nu_2 A \left|A\right|^2$
with suitable chosen coefficients $\nu_1, \nu_2\in \mathbb{R}$.
Then there exist $\varepsilon_0 > 0, C > 0$ such that for all
$\varepsilon \in (0, \varepsilon_0)$ we have solutions $(u_n)_{n \in
\mathbb{Z}}$ of \eqref{eq11} with 
\[
\sup_{t \in [0, T_0/\varepsilon^2]} \sup_{n \in \mathbb{N}} \left|
 u_n(t) - \psi_n (t) \right| \leq C \varepsilon^{3/2}
\]
where
\[
\psi_n (t) = \varepsilon A\left( \varepsilon (n-\omega'_1 (k_0)t),
\varepsilon^2 t\right) e^{i (k_0 n-\omega_0t)} + \textup{c.c.}
\]
with $\omega_1(k)^2 = \omega(k)^2 b_1$

\end{theorem}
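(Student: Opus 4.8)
The plan is to reduce the statement to the established NLS approximation theory for FPU lattices, since after Taylor expansion the granular chain is exactly of FPU form. First I would observe that, under the ansatz $\psi_n(t) = \varepsilon A(\varepsilon(n - \omega_1'(k_0)t), \varepsilon^2 t) e^{i(k_0 n - \omega_0 t)} + \textrm{c.c.}$, the solution amplitude is of order $\OO(\varepsilon)$. Consequently $\sup_n |u_n| \leq \delta_0$ holds for $\varepsilon$ small enough, so $u_n$ remains strictly within the radius of convergence $\delta_0$ of the series $\sum_\ell b_\ell u^\ell$. This guarantees that the beads never lose contact and the Heaviside function $\Theta$ is inactive; the dynamics is then governed by the smooth expansion \eqref{law2}, which is precisely the FPU form treated in \cite{CCPS12,SW99FPU}. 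With $\omega(k)^2 = 2(1-\cos k)$ and $\omega_1(k)^2 = b_1 \omega(k)^2$, the linear dispersion relation is the standard FPU one, and the coefficients $\nu_1, \nu_2$ of the NLS equation are fixed by the usual group-velocity and cubic-resonance computations.

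Given this identification, the approximation estimate follows by quoting \cite[Theorem 3.1]{CCPS12}. For completeness I would sketch the mechanism of that proof. One inserts the renormalized error $\varepsilon^{\beta} R = u - \psi$ (with an appropriate $\beta$) into \eqref{law2}, computes the residual $\mathrm{Res}(\psi)$ measuring the failure of the NLS ansatz to solve the lattice equation, and bounds it in a suitable Fourier-based norm; the NLS scaling makes the residual small to the required order. One then derives an evolution equation for $R$ and controls it by an energy estimate over the long time interval $[0, T_0/\varepsilon^2]$, closing the argument with Gronwall's inequality exactly as in Section~\ref{sec:error}.

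The delicate point, and the reason the theorem demands the high regularity $A \in C([0,T_0], H^{19})$, is the treatment of the quadratic terms $b_2 u^2$ in the expansion. Over the NLS time scale $\varepsilon^{-2}$ these quadratic interactions would, if handled naively, produce error growth of order $\OO(1)$ rather than the claimed $\OO(\varepsilon^{3/2})$. The standard remedy is a near-identity normal-form transformation that removes the quadratic terms at the cost of adding higher-order corrections to the ansatz, and each such transformation consumes several derivatives, which accounts for the large Sobolev index. I expect this normal-form bookkeeping, together with the verification that no exact quadratic resonance occurs at the carrier wavenumber $k_0$, to be the main technical obstacle; but since the granular system is already cast in the exact FPU form of \cite{CCPS12,SW99FPU}, this work has effectively been carried out there, and the theorem follows directly.
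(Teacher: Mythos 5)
Your proposal matches the paper's proof: the paper likewise notes that the $\OO(\varepsilon)$ amplitude keeps $u_n$ inside the radius of convergence $\delta_0$ (so the Heaviside nonlinearity is inactive and \eqref{law2} is exactly an FPU system), and then simply invokes the existing NLS approximation theorems for FPU lattices — your extra sketch of the residual/energy/normal-form machinery is consistent with what those cited works do. One small correction: for the NLS result the paper cites \cite[Theorem 4.1]{CCPS12} (or \cite[Theorem 1.1]{Schn10}), not Theorem 3.1 of \cite{CCPS12}, which is the KdV statement.
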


\begin{proof}
The proof follows trivially from  \cite[Theorem 4.1]{CCPS12} or \cite[Theorem 1.1]{Schn10}.
\end{proof}


\section{Conclusions and Future Challenges}

The main result of this paper derives in a rigorous way (and with 
controllable corrections)
the leading order quasicontinuum approximation
for long wavelength solutions in the granular 
crystal model~\eqref{law}, in accordance with the formal derivation 
of Nesterenko~\cite{Nesterenko2001}. As a technical
assumption, we required the presence of a precompression factor (while the
original Nesterenko model has been developed also in the case of the
so-called ``sonic vacuum'' i.e., without precompression). One
obvious avenue of future research is to investigate a proof without this assumption, which, however, would demand a fundamentally different 
technique than the one presented herein. 

On the other hand, perhaps an
even more important aspect of investigation
concerns the well-posedness
theory of the full Nesterenko model~\cite{Nesterenko2001} or of the variant
developed by Ahnert and Pikovsky in~\cite{pikovsky}. 
One important consequence of keeping only the leading order terms in the continuum model (as done herein) is 
the inability to capture the exact solitary wave solutions (which are known to exist in granular crystals \cite{Friesecke94,pego05}) and are at
the core of experimental observations 
in such systems~\cite{Nesterenko2001,Sen2008,Kevrekidis2011}. 
The methods of this paper cannot be directly applied to that case, although
we should note that we suspect that these higher order long
wavelength  models suffer 
(especially so in the case of precompression) from the type of 
pathologies that were identified by Rosenau and led him to devise
appropriate regularizations~\cite{r1,r2}; see 
also the more recent discussion of~\cite{titi}. It would be especially
relevant to consider such regularizations of the higher order long
wavelength models both from a rigorous, as well as from a numerical 
perspective.

Another important consequence of keeping only first order terms  in the continuum model is the prediction of the development of shock waves for a suitable
(yet broad) class of initial data.
Although the main theorem of
this paper does not apply to the case of shock waves, due to 
smoothness considerations, numerical simulations indicate the steepening
of relevant initial data towards a shock structure and suggest that this
is indeed an issue
worthy of further exploration,
with an aim towards transferring these results
to the discrete model. Indeed, it is known
in FPU lattices that this procedure fails \cite{Herrmann10}, due to the existence of high frequency oscillations (resulting
from so-called dispersion shock waves). Thus, a different continuum model 
(than the one derived herein) will most likely be needed to
fully characterize the emerging dispersive shock wave case. 
These topics are currently under consideration and will be reported
in future publications.

\section*{Acknowledgments} The authors would like to thank Alexey Miroshnikov and Robin Young for helpful discussions regarding
shock formation in the p-system.  PGK acknowledges support from the US National Science Foundation under grant
CMMI-1000337, the US Air Force under grant FA9550-12-1-0332, the Alexander von Humboldt Foundation, as well as
the Alexander S. Onassis Public Benefit Foundation.


\begin{thebibliography}{10}

\bibitem{pikovsky}
K.~Ahnert and A.~Pikovsky, \emph{Compactons and chaos in strongly nonlinear
  lattices}, Phys. Rev. E \textbf{79} (2009), 026209.

\bibitem{CCPS12}
Martina Chirilus-Bruckner, Christopher Chong, Oskar Prill, and Guido Schneider,
  \emph{Rigorous description of macroscopic wave packets in infinite periodic
  chains of coupled oscillators by modulation equations}, Discrete Contin. Dyn.
  Syst. Ser. S \textbf{5} (2012), 879--901.

\bibitem{pego05}
R.L. English and J.M. Pego, \emph{On the solitary wave pulse in a chain of
  beads}, Proceedings of the AMS \textbf{133} (2005), 1763--1768.

\bibitem{Friesecke94}
G.~Friesecke and J.A.D. Wattis, \emph{Existence theorem for solitary waves on
  lattices}, Comm. Math. Phys. \textbf{161} (1994), 391--418.

\bibitem{HerboldPRE07}
E.~B. Herbold and V.~F. Nesterenko, \emph{Shock wave structure in a strongly
  nonlinear lattice with viscous dissipation}, Phys. Rev. E \textbf{75} (2007),
  021304.

\bibitem{Herrmann10}
Michael Herrmann and Jens D~M Rademacher, \emph{Riemann solvers and
  undercompressive shocks of convex fpu chains}, Nonlinearity \textbf{23}
  (2010), 277--304.

\bibitem{Kato75}
Tosio Kato, \emph{{The Cauchy problem for quasi-linear symmetric hyperbolic
  systems.}}, Arch. Ration. Mech. Anal. \textbf{58} (1975), 181--205.

\bibitem{Kevrekidis2011}
P.~G. Kevrekidis, \emph{Non-linear waves in lattices: past, present, future},
  IMA Journal of Applied Mathematics (2011), 389--423.

\bibitem{titi}
P.~G. Kevrekidis, I.~G. Kevrekidis, A.~R. Bishop, and E.~S. Titi,
  \emph{Continuum approach to discreteness}, Phys. Rev. E \textbf{65} (2002),
  046613.

\bibitem{LeVeque}
Randall~J. Leveque, \emph{{Numerical Methods for Conservation Laws (Lectures in
  Mathematics)}}, Birkhauser, 1992.

\bibitem{Molinari2009}
A.~Molinari and C.~Daraio, \emph{Stationary shocks in periodic highly nonlinear
  granular chains}, Phys. Rev. E \textbf{80} (2009), 056602.

\bibitem{Nesterenko2001}
V.F. Nesterenko, \emph{Dynamics of heterogeneous materials}, Springer-Verlag,
  New York, 2001.

\bibitem{r1}
P.~Rosenau, \emph{Dynamics of nonlinear mass-spring chains near the continuum
  limit}, Physics Letters A \textbf{118} (1986), 222--227.

\bibitem{r2}
P.~Rosenau, \emph{Dynamics of dense lattices}, Phys. Rev. B \textbf{36} (1987),
  5868--5876.

\bibitem{Schn10}
Guido Schneider, \emph{Bounds for the nonlinear {S}chr\"odinger approximation
  of the {F}ermi-{P}asta-{U}lam system}, Appl. Anal. \textbf{89} (2010), no.~9,
  1523--1539.

\bibitem{SW99FPU}
Guido Schneider and C.Eugene Wayne, \emph{{Counter-propagating waves on fluid
  surfaces and the continuum limit of the Fermi-Pasta-Ulam model.}}, {Fiedler,
  B. (ed.) et al., International conference on differential equations.
  Proceedings of the conference, Equadiff '99, Berlin, Germany, August 1-7,
  1999. Vol. 1. Singapore: World Scientific. 390-404}, 2000.

\bibitem{Sen2008}
Surajit Sen, Jongbae Hong, Jonghun Bang, Edgar Avalos, and Robert Doney,
  \emph{Solitary waves in the granular chain}, Physics Reports \textbf{462}
  (2008), no.~2, 21 -- 66.

\bibitem{Smoller}
J.~Smoller, \emph{Shock waves and reaction-diffusion equations}, Springer, New
  York, Heidelberg, Berlin, 1983.

\bibitem{atanas2}
A.~Stefanov and P.~Kevrekidis, \emph{On the existence of solitary traveling
  waves for generalized hertzian chains}, J. Nonlin. Sci. \textbf{22} (2012),
  327--349.

\bibitem{atanas1}
A.~Stefanov and P.~Kevrekidis, \emph{Traveling waves for monomer chains with precompression},
  Nonlinearity \textbf{26} (2013), 539--564.

\end{thebibliography}
\end{document}